\numberwithin{equation}{section} 
\theoremstyle{definition}
\newtheorem{proposition}{Proposition}
\newtheorem{definition}{Definition}
\newtheorem{remark}{Remark}
\newtheorem{theorem}{Theorem}
\newtheorem{lemma}{Lemma}
\newcommand{\mr}[1]{\mathrm{#1}}
\newcommand{\mc}[1]{\mathcal{#1}}
\newcommand{\ms}[1]{\mathsf{#1}}
\newcommand{\mb}[1]{\mathbb{#1}}
\newcommand{\sis}[2]{\langle#1|#2\rangle}
\newcommand{\tr}[1]{\mr{tr}[#1]}
\newcommand{\ptr}[2]{\mr{tr}_{#1}[#2]}
\newcommand{\dual}[2]{\langle #1,#2\rangle}
\newcommand{\la}{\langle}
\newcommand{\ra}{\rangle}
\newcommand{\ovl}[1]{\overline{#1}}
\newcommand{\hil}{\mathcal{H}}
\newcommand{\id}{\mathbbm{1}} 
\newcommand{\f}{\varphi}
\newcommand{\Om}{\Omega}
\newcommand{\om}{\omega}
\begin{document}

\title{Robustness of incompatibility for quantum devices}

\author{Erkka Haapasalo}
\email{ethaap@utu.fi}
\address{Turku Centre for Quantum Physics, Department of Physics and Astronomy, University of Turku, FI-20014 Turku, Finland}

\begin{abstract}

A robustness measure for incompatibility of quantum devices in the lines of the robustness of entanglement is proposed. The concept of general robustness measures is first introduced in general convex-geometric settings and these ideas are then applied to measure how incompatible a given pair of quantum devices is. The robustness of quantum incompatibility is calculated in three special cases: a pair of Fourier-coupled rank-1 sharp observables, a pair of decodable channels, where decodability means left-invertibility by a channel, and a pair consisting of a rank-1 sharp observable and a decodable channel.\\[7pt]

\noindent
{\bf Keywords}: positive-operator-valued measure, quantum channel, quantum instrument, quantum compatibility, joint measurability, convexity
\newline
\noindent
{\bf PACS-numbers}: 03.65.-w, 03.65.Ta, 03.67.-a, 03.67.Mn
\end{abstract}
\maketitle


\section{Introduction}\label{intro}

As an inherently probabilistic construction, quantum theory abounds convex sets: the sets of states, observables, state changes, and measurements of a quantum system are all convex. Unlike in classical probability theories, these quantum theoretical convex structures are not simplexes, i.e.,\ states and measurements cannot be decomposed into combinations of extreme points in a unique way. This gives rise to many of the interesting aspects of quantum theory.

The rich structure of the set of quantum states has been extensively studied; see \cite{kiet} and references therein. Especially entanglement, as a truly quantum phenomenon, and its detection is the focus of great attention. In this paper, we concentrate on another peculiarity of quantum theory that has no counterpart in the classical world: incompatibility. Classical measurements can be carried out freely together and the measurements do not alter the system. On the quantum side, however, this no longer applies. There are many interesting pairs of quantum observables and measurements that do not allow any joint measurements or realizations. A canonical example is the position-momentum pair or any generalized Weyl pair.

In general, quantum incompatibility of a pair of quantum devices (observables, state-changes, instruments,\ldots) is defined as the impossibility of joining the devices into a single quantum device from which the original devices could be obtained by reduction. We give rigorous definitions for incompatibility in all the cases studied in this paper but an all-encompassing definition of quantum incompatibility can be found, e.g.,\ from \cite{HaHePe13}. It should be pointed out that the set of quantum states does not exhibit incompatibility; any pair of states can be joined into a bipartite state from which the original states can be obtained as partial traces.

Quantum incompatibility can be seen as a special resource like entanglement. That is, incompatibility is not simply a hindrance but rather a valuable non-classical feature that can be utilized in, e.g.,\ quantum information processing. In fact, there are connections between entanglement and incompatibility: it was recently shown in \cite{Brunner_etal2014, GuMoUo14} that incompatibility of quantum observables and EPR-steering of quantum states are operationally linked. Incompatibility as a resource is thus strongly related to the resource theory of steering. Moreover, a quantum channel is entanglement braking if and only if its transpose maps any observable pair into a jointly measurable (compatible) pair \cite{Pusey15}.

There are several measures for quantum entanglement one of which is the {\it robustness of entanglement} originally presented in \cite{VidalTarrach1998} that is purely based on the convex-geometric structure of the set of quantum states. Similar convexity based distance measures introduced for quantum convex sets include the {\it boundariness} defined in \cite{HaSeZi14} and the {\it steerable weight} introduced in \cite{steerableweight} quantifying the presence of EPR-steering. In this paper, we introduce a robustness measure for quantum incompatibility in the lines of robustness of entanglement. This quantity measures how well a given pair of quantum devices resists combining into a joint device under noise. Quantifying incompatibility of quantum observables has been earlier studied from a somewhat different viewpoint in \cite{Busch_etal2013, HeKiRe15, Heinosaari_etal2014}, but here we extend the notion of robustness of incompatibility to encompass all relevant quantum measurement device pairs.

A general description of robustness measures is given in Section \ref{general}. In Section \ref{sec:math}, we review the basic descriptions for the essential quantum apparati and, in Section \ref{sec:comp}, define the concept of compatibility of these apparati and introduce the robustness measures for incompatibility. In Section \ref{sec:robobs}, we study some special properties of the robustness of incompatibility. We calculate the robustness of incompatibility in three exemplary cases in Section \ref{sec:ex}.

\section{General robustness measures}\label{general}

The sets of measurement devices in any general statistical physical theory are naturally endowed with a convex structure. Namely, suppose that the set of devices under study is ${\bf Q}$ and $\Phi_1,\,\Phi_2\in{\bf Q}$ are devices of the same type. Then one can realize a device $\Phi\in{\bf Q}$ by applying $\Phi_1$ with probability $t\in[0,1]$ and by applying $\Phi_2$ with probability $1-t$. By expressing $\Phi$ as $t\Phi_1+(1-t)\Phi_2$, ${\bf Q}$ becomes a convex set. In a convex combination $t\Phi_1+(1-t)\Phi_2$, we view the coefficients $t$ as random noise or perturbation caused by statistical mixing of devices, and we use the term `noise' also in general convex geometries even in the absence of direct physical link. In what follows, we study convex sets $K$ of devices or device pairs (which are also naturally convex  by defining $t(\Phi_1,\Psi_1)+(1-t)(\Phi_2,\Psi_2)=(t\Phi_1+(1-t)\Phi_2,t\Psi_1+(1-t)\Psi_2)$) or, in general, {\it selection procedures} for a physical experiment with regards to a particular task or resource of the selections, such as entanglement (of individual states) or incompatibility (of device pairs) in the quantum case. We assume that there is a subset $L_0\subset K$ useless selections with respect to the property we are studying and, moreover, that this set is convex as well, i.e.,\ random mixtures of useless selections are also useless. In order to quantify the usefulness of an element $x\in K\setminus L_0$, we propose a way to measure the `distance' of elements $x\in K$ from $L_0$.

Let $V$ be a real vector space and $F\subset V$ an affine subspace. From now on, we fix a subset $L\subset F$ whose relative complement $F\setminus L=:L_0$ is convex; $L_0$ is to be thought of as the set of useless devices with respect to the task at hand. In the physical situations we will study later on, $F$ will be the minimal affine subspace generated by both $L$ and $L_0$, and $L_0$ will be {\it absorbing within $F$} in the sense that there is $y_0\in L_0$ such that for any $x\in F$ there is $t\in(0,1]$ such that $tx+(1-t)y_0\in L_0$. These assumptions are, however, not necessary in this section. We measure the distance of $x\in F$ to $L_0$ by quantifying the least amount of noise (from $L_0$ or some other convex subset of $F$) to be added to an element $x\in F$ in order to enter $L_0$, or, in other words, how robustly $x$ stays outside $L_0$ under added noise.

Let us make the following auxiliary definition.

\begin{definition}\label{def:relrob}
For any $x,\,y\in F$, let us define
$$
w_L(x|y)=\sup\{t\in[0,1]\,|\,tx+(1-t)y\in L_0\},
$$
where we define $\sup\emptyset=0$. We call $w_L(x|y)$ as the {\it relative $L$-robustness of $x$ relative to $y$}.
\end{definition}

It is immediate that, for a locally convex topological vector space $V$, whenever $y\in L_0$, we have $w_L(x|y)=1$ if and only if $x$ is in the closure of $L_0$. The number $1-w_L(x|y)$ is the least amount of noise in the form of a specific element $y$ from $F$ we have to add to $x$ in order to enter $L_0$, i.e.,\ the greatest additional noise in the form of $y$ the element $x$ tolerates without getting indiscriminable from $L_0$. It is immediate that, if $V$ is a locally convex space and $L_0$ is closed, then the supremum in the definition of $w_L(x|y)$ is attained when $y\in L_0$, i.e.,\ $w_L(x|y)x+(1-w_L(x|y))y\in L_0$. See Figure \ref{kuva1} for a visualization of the relative robustness. We can prove the following properties for the relative robustness:

\begin{proposition}\label{prop:relrob}
Let $x,\,y\in F$ be any fixed elements. Let us denote formally $1/t=\infty$ whenever $t=0$.
\begin{itemize}
\item[(a)] The function $w_L(\cdot|y)^{-1}:F\to\mb R\cup\{\infty\}$ is convex, i.e.,
$$
\frac1{w_L(tx_1+(1-t)x_2|y)}\leq\frac{t}{w_L(x_1|y)}+\frac{1-t}{w_L(x_2|y)}
$$
for any $x_1,\,x_2\in F$, $0\leq t\leq1$.
\item[(b)] The function $\big(1-w_L(x|\cdot)\big)^{-1}:L_0\to\mb R\cup\{\infty\}$ is concave, i.e.,
$$
\frac1{1-w_L(x|sy_1+(1-s)y_2)}\geq\frac{s}{1-w_L(x|y_1)}+\frac{1-s}{1-w_L(x|y_2)}
$$
for any $y_1,\,y_2\in L_0$, $0\leq s\leq1$.
\end{itemize}
\end{proposition}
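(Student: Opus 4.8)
The plan is to prove both convexity statements directly from Definition \ref{def:relrob} by taking near-optimal parameters in the defining suprema and exploiting the fact that $L_0$ is convex. Throughout, note that a point of the form $tx+(1-t)y$ traces out the segment $[y,x]$ as $t$ ranges over $[0,1]$, and that $w_L(x|y)$ records how far along this segment (measured from $y$) one can go and still remain in $L_0$; by convexity of $L_0$, the set of admissible $t$ is an interval containing $0$ (when nonempty), so the supremum $w_L(x|y)$ is in fact "almost attained" in the sense that every $t<w_L(x|y)$ is admissible. This last observation is the workhorse for both parts.

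For part (a), fix $y$, set $x=tx_1+(1-t)x_2$, and let $w_i=w_L(x_i|y)$. If either $w_i=0$ the right-hand side is $\infty$ and there is nothing to prove, so assume $w_1,w_2>0$. For $\eps>0$ small, pick admissible parameters $s_i < w_i$ with $s_i$ close to $w_i$, so that $s_ix_i+(1-s_i)y\in L_0$. The idea is to find a single parameter $r$ and a convex combination of these two points that equals $rx+(1-r)y$; solving the bookkeeping, one is led to the choice $r^{-1} = t s_1^{-1} + (1-t)s_2^{-1}$, i.e.\ $r$ is the "weighted harmonic mean" of $s_1,s_2$ with weights $t,1-t$. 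With this $r$ one checks that $rx+(1-r)y$ is a convex combination (with the appropriate weights $\lambda, 1-\lambda$, where $\lambda = r t /s_1$) of the two points $s_ix_i+(1-s_i)y\in L_0$, hence lies in $L_0$ by convexity; therefore $w_L(x|y)\geq r$, giving $w_L(x|y)^{-1}\leq r^{-1} = t s_1^{-1}+(1-t)s_2^{-1}$. Letting $s_i\to w_i$ yields the claim. The one technical wrinkle is verifying that the weight $\lambda$ indeed lies in $[0,1]$ and that the algebra of the convex combination genuinely collapses to $rx+(1-r)y$; this is the routine-but-fiddly part to be carried out carefully.

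For part (b), fix $x$, take $y_1,y_2\in L_0$, and set $y=sy_1+(1-s)y_2$; write $v_i = 1-w_L(x|y_i)$ and $v = 1-w_L(x|y)$. Again if some $v_i=0$ the inequality is trivial, so assume $v_1,v_2>0$. Now for $\eps>0$ choose admissible $t_i < w_L(x|y_i)$, i.e.\ $u_i := 1-t_i > v_i$ with $u_i$ close to $v_i$, and $t_ix+(1-t_i)y_i\in L_0$. Mirroring part (a), the right combination to form is governed by the harmonic mean of the $u_i$'s: set $u^{-1} = s u_1^{-1}+(1-s)u_2^{-1}$ and $t=1-u$. One then exhibits $tx+(1-t)y$ as a convex combination of the points $t_ix+(1-t_i)y_i\in L_0$ — crucially using that the "$y$-part" $(1-t)y = (1-t)(sy_1+(1-s)y_2)$ splits correctly because $y$ is itself the prescribed convex combination of $y_1,y_2$ — so $tx+(1-t)y\in L_0$, whence $w_L(x|y)\geq t$, i.e.\ $v = 1-w_L(x|y)\leq 1-t = u$, and $v^{-1}\geq u^{-1} = s u_1^{-1}+(1-s)u_2^{-1}$. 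Letting $u_i\to v_i$ (equivalently $t_i\to w_L(x|y_i)$) finishes the proof.

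The main obstacle in both parts is purely combinatorial: identifying the correct parameter $r$ (resp.\ $t$) as a weighted harmonic mean and then checking that the resulting point really is a convex combination of the two known members of $L_0$, with all weights legitimately in $[0,1]$. Once the harmonic-mean substitution is guessed, convexity of $L_0$ does the rest; there is no topological input needed, consistent with the remark preceding the proposition that these assumptions are not required here. A minor point worth a sentence in the writeup is the degenerate case where a supremum is over the empty set (giving robustness $0$ and the convention $1/0=\infty$), which makes the asserted inequalities hold vacuously.
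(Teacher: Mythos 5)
Your proof is correct and essentially identical to the paper's: in both parts you take near-optimal admissible parameters, define the new parameter as the weighted harmonic mean, and verify that the corresponding point is a convex combination (with weights $\lambda=rt/s_1$ in (a) and $\mu=us/u_1$ in (b), exactly the $s$ and $r$ appearing in the paper's proof) of two known elements of $L_0$. The only quibble is that in part (b) the degenerate case $w_L(x|y_i)=1$ is not actually trivial---there the right-hand side is $+\infty$ and the inequality points the wrong way for a free pass---but your limiting argument $u_i\downarrow v_i$ forces $1-w_L(x|y)=0$ in that case anyway, so nothing substantive is missing.
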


\begin{proof}
Let us prove item (a). Pick $x_1,\,x_2\in F$, $y\in F$ and $t\in[0,1]$ and denote $x=tx_1+(1-t)x_2$. We may restrict to the case where $w_L(x_r|y)>0$, $r=1,\,2$. Choose $0<t_r<w_L(x_r|y)$ such that $z_r:=t_rx_r+(1-t_r)y\in L_0$, $r=1,\,2$. Through simple calculations, one finds that, denoting $t_0=\big(t/t_1+(1-t)/t_2\big)^{-1}$ and $s=tt_0/t_1$, one may write
$$
L_0\ni sz_1+(1-s)z_2=t_0x+(1-t_0)y.
$$
This means that $w_L(x|y)\geq t_0$ and, as one lets $t_r\uparrow w_L(x_r|y)$, $r=1,\,2$, the claim is proven.

We go on to proving item (b). Pick $y_1,\,y_2\in L_0$ and $s\in[0,1]$. If $w_L(x|y_1)=0$ (or $w_L(x|y_2)=0$), then set $t_1=0$ ($t_2=0$), otherwise, set $t_1\in[0,w_L(x|y_1))$ and $t_2\in[0,w_L(x|y_2))$. Denote $y_0=sy_1+(1-s)y_2$ and define $t_0\in[0,1]$ through $1-t_0=\big(s/(1-t_1)+(1-s)/(1-t_2)\big)^{-1}$. (Note that $t_1,\,t_2<1$.) Let $z_1\in L_0$, $z_2\in L_0$ be such that
$$
z_1=t_1x+(1-t_1)y_1\in L_0,\qquad z_2=t_2x+(1-t_2)y_2\in L_0.
$$
Direct calculation shows that we may write $z=t_0x+(1-t_0)y$ where
$$
z=rz_1+(1-r)z_2,\quad r=\frac{s(1-t_2)}{s(1-t_2)+(1-s)(1-t_1)},
$$
and, hence, $z\in L_0$ so that, by definition, $t_0\leq w_L(x|y)$. This amounts to
$$
\frac{1}{1-w_L(x|y)}\geq\frac{s}{1-t_1}+\frac{1-s}{1-t_2},
$$
and as we let $t_1\uparrow w_L(x|y_1)$ and $t_2\uparrow w_L(x|y_2)$, we obtain the desired result.
\end{proof}

\begin{center}
\begin{figure}
\includegraphics[scale=0.1]{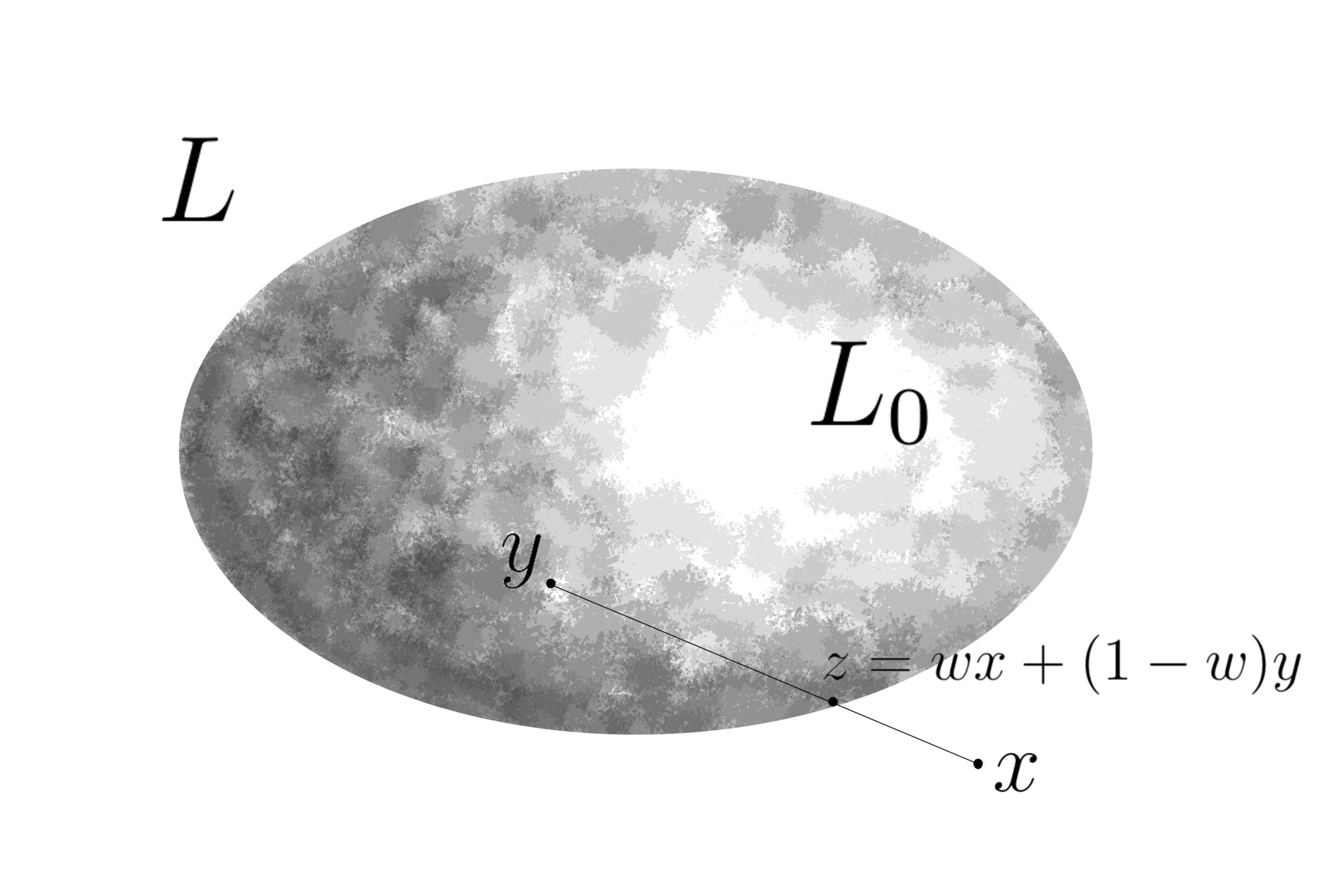}
\caption{\label{kuva1} Illustration of the relative robustness function. The points $x\in F$ and $y\in L_0$ are given and typically $x\in L$. The relative robustness is obtained by considering the line segment joining $x$ and $y$ and quantifying the weight $w$ with which $x$ appears in the boundary element $z$; in this case $w=w_L(x|y)$. One has $z\in L_0$ if $L_0$ is closed, i.e.,\ the supremum in the evaluation of the absolute robustness is actually obtained.}
\end{figure}
\end{center}

\begin{definition}\label{def:robust}
For any $x\in F$ let us define
$$
w_L(x)=\sup_{y\in L_0}w_L(x|y).
$$
We call $w_L(x)$ as the {\it (absolute) $L$-robustness of $x$}.
\end{definition}

The absolute robustness $w_L(x)$ measures the overall `distance' of $x$ to $L_0=F\setminus L$ in the sense that, whenever $L_0$ is a compact subset of a locally convex space $V$, $x\in L_0$ if and only if $w_L(x)=1$. Moreover, $1-w_L(x)$ is the greatest amount of random noise from $L_0$ that $x$ tolerates without being immersed in $L_0$. If $L_0$ is absorbing in $F$, then $w_L(x)>0$ for all $x\in F$.

Suppose that $x\in F$, $y,\,z\in L_0$ and $t\in[0,1]$ are such that $z=tx+(1-t)y$. Let us assume that there is an element $y'$ of $L_0$ on the line connecting $x,\,y$ and $z$ such that $y'=(1+p)y-px$ for some $p>0$, i.e.,\ $y'$ is `behind' $y$ when looked from $x$. It follows that $z=t'x+(1-t')y'$, where $t'=(t+p)/(1+p)=t+(1-t)p/(1+p)>t$. From this we may conclude that in order to increase $w_L(x|y)$ for a fixed $x\in F$ over $y\in L_0$ we should look for points $y$ of $L_0$ `furthest away' from $x$, as is also illustrated in Figure \ref{kuva2}. However, the maximizing point (if it exists) is typically not an extreme point of $L_0$.

We may prove the following:

\begin{proposition}\label{prop:robust}
The function $\big(w_L(\cdot)\big)^{-1}:F\to\mb R\cup\{\infty\}$ is convex, i.e.,\ for all $x_1,\,x_2\in F$ and $t\in[0,1]$
$$
\frac1{w_L(tx_1+(1-t)x_2)}\leq\frac{t}{w_L(x_1)}+\frac{1-t}{w_L(x_2)}.
$$
\end{proposition}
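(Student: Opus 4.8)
The plan is to deduce Proposition~\ref{prop:robust} directly from Proposition~\ref{prop:relrob}(a), exploiting the fact that the absolute robustness $w_L(x)$ is a supremum over $y\in L_0$ of the relative robustnesses $w_L(x|y)$, together with the convexity of each $w_L(\cdot|y)^{-1}$. The first step is to fix $x_1,x_2\in F$ and $t\in[0,1]$, write $x=tx_1+(1-t)x_2$, and reduce to the nontrivial case $w_L(x_1)>0$ and $w_L(x_2)>0$ (if either is $0$ the right-hand side is $\infty$ and there is nothing to prove, and if $w_L(x)=0$ while both terms on the right are finite one still has the trivial bound $\infty\geq$ finite — actually the inequality to check is only interesting when the left side is large, i.e.\ $w_L(x)$ small).

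The core step: given $\eps>0$, choose $y_1,y_2\in L_0$ with $w_L(x_r|y_r)>w_L(x_r)-\eps$ for $r=1,2$. The naive move would be to apply Proposition~\ref{prop:relrob}(a) to these two points, but that proposition is stated for a \emph{single} fixed $y$, so I cannot directly combine $w_L(x_1|y_1)$ and $w_L(x_2|y_2)$ with different second arguments. The fix is to take a convex combination $y=sy_1+(1-s)y_2\in L_0$ with a weight $s$ to be chosen, and observe two things: first, by Proposition~\ref{prop:relrob}(a) applied with this common $y$,
$$
\frac{1}{w_L(x|y)}\leq\frac{t}{w_L(x_1|y)}+\frac{1-t}{w_L(x_2|y)};
$$
second, one needs a bound relating $w_L(x_r|y)$ to $w_L(x_r|y_r)$. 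This is exactly where Proposition~\ref{prop:relrob}(b) enters — or rather, one would like $w_L(x_r|y)\geq$ something in terms of $w_L(x_r|y_r)$ when $y$ is a convex combination involving $y_r$. The cleanest route is to instead mimic the \emph{proof} of~\ref{prop:relrob}(a) at the level of witnesses: pick $t_r<w_L(x_r|y_r)$ with $t_rx_r+(1-t_r)y_r\in L_0$, and show that an appropriate convex combination of these two boundary points equals $t_0x+(1-t_0)y$ for a suitable common $y\in L_0$ and $t_0$ satisfying $1/t_0=t/t_1+(1-t)/t_2$; then $w_L(x)\geq w_L(x|y)\geq t_0$, and letting $t_r\uparrow w_L(x_r|y_r)$ and then $\eps\downarrow0$ gives $1/w_L(x)\leq t/w_L(x_1)+(1-t)/w_L(x_2)$.

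I expect the main obstacle to be the bookkeeping that produces the common noise element $y$: one has $z_r=t_rx_r+(1-t_r)y_r$, and forming $z=\lambda z_1+(1-\lambda)z_2$ one gets $z=\big(\lambda t_1x_1+(1-\lambda)t_2x_2\big)+\big(\lambda(1-t_1)y_1+(1-\lambda)(1-t_2)y_2\big)$, and one must choose $\lambda$ so that $\lambda t_1x_1+(1-\lambda)t_2x_2$ is a scalar multiple of $x=tx_1+(1-t)x_2$, i.e.\ $\lambda t_1:(1-\lambda)t_2=t:(1-t)$, forcing $\lambda=tt_2/(tt_2+(1-t)t_1)$ — precisely the pattern from the proof of~\ref{prop:relrob}(a). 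With that $\lambda$, the coefficient of $x$ is $t_0=\lambda t_1/t=(1/t_1\cdot t+1/t_2\cdot(1-t))^{-1}$ and the remaining part is $(1-t_0)$ times a convex combination of $y_1,y_2$, which lies in $L_0$ by convexity of $L_0$. So the argument is essentially a re-run of~\ref{prop:relrob}(a) with the freedom to also vary the noise, and the only real care needed is checking the weights sum correctly and stay in $[0,1]$; one could alternatively just cite~\ref{prop:relrob}(a) with $y=\lambda' y_1+(1-\lambda')y_2$ for the right $\lambda'$ plus a short lemma that $w_L(x_r|\text{mixture})\geq$ the relevant bound, but re-running the computation is more transparent and self-contained.
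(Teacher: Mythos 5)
Your proposal is correct and follows essentially the same route as the paper: the paper's proof of Proposition \ref{prop:robust} is exactly the "re-run of Proposition \ref{prop:relrob}(a) with the noise allowed to vary" that you describe, with the same choice of mixing weight (the paper's weight $t_0 t/t_1$ on $z_1$ equals your $\lambda=tt_2/(tt_2+(1-t)t_1)$) and the same identification $t_0=\big(t/t_1+(1-t)/t_2\big)^{-1}$, after which $t_0x+(1-t_0)y=z\in L_0$ with $y\in L_0$ a convex combination of $y_1,y_2$. Your correct observation that Proposition \ref{prop:relrob}(a) cannot be cited directly because of the differing second arguments, and your two-stage limit ($t_r\uparrow w_L(x_r|y_r)$, then $\eps\downarrow 0$), match the paper's argument in substance.
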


\begin{proof}
Pick $x_1,\,x_2\in F$ and $t\in[0,1]$ and define $x=tx_1+(1-t)x_2$. Clearly, we may restrict to the case where $w_L(x_1),\,w_L(x_2)>0$. Moreover, let $t_1\in(0,w_L(x_1))$, $t_2\in(0,w_L(x_2))$ and $y_1,\,y_2\in L_0$ be such that
$$
z_1=t_1x_1+(1-t_1)y_1\in L_0,\qquad z_2=t_2x_2+(1-t_2)y_2\in L_0.
$$
Define $t_0=\big(t/t_1+(1-t)/t_2\big)^{-1}$, and
\begin{eqnarray*}
y&=&\frac{t_0}{1-t_0}\Big(t\frac{1-t_1}{t_1}y_1+(1-t)\frac{1-t_2}{t_2}y_2\Big),\\
z&=&t_0\Big(\frac{t}{t_1}z_1+\frac{1-t}{t_2}z_2\Big).
\end{eqnarray*}
It is easy to check that $y$ is a convex combination of $y_1$ and $y_2$ and $z$ is a convex combination of $z_1$ and $z_2$, meaning $y\in L_0$ and $z\in L_0$. Furthermore, one may write $t_0x+(1-t_0)y=z$, so that $t_0\leq w_L(x)$ which is equivalent to
$$
\frac1{w_L(x)}\leq\frac{t}{t_1}+\frac{1-t}{t_2}.
$$
Letting $t_1\uparrow w_L(x_1)$ and $t_2\uparrow w_L(x_2)$, the claim is proven.
\end{proof}

\begin{center}
\begin{figure}
\includegraphics[scale=0.1]{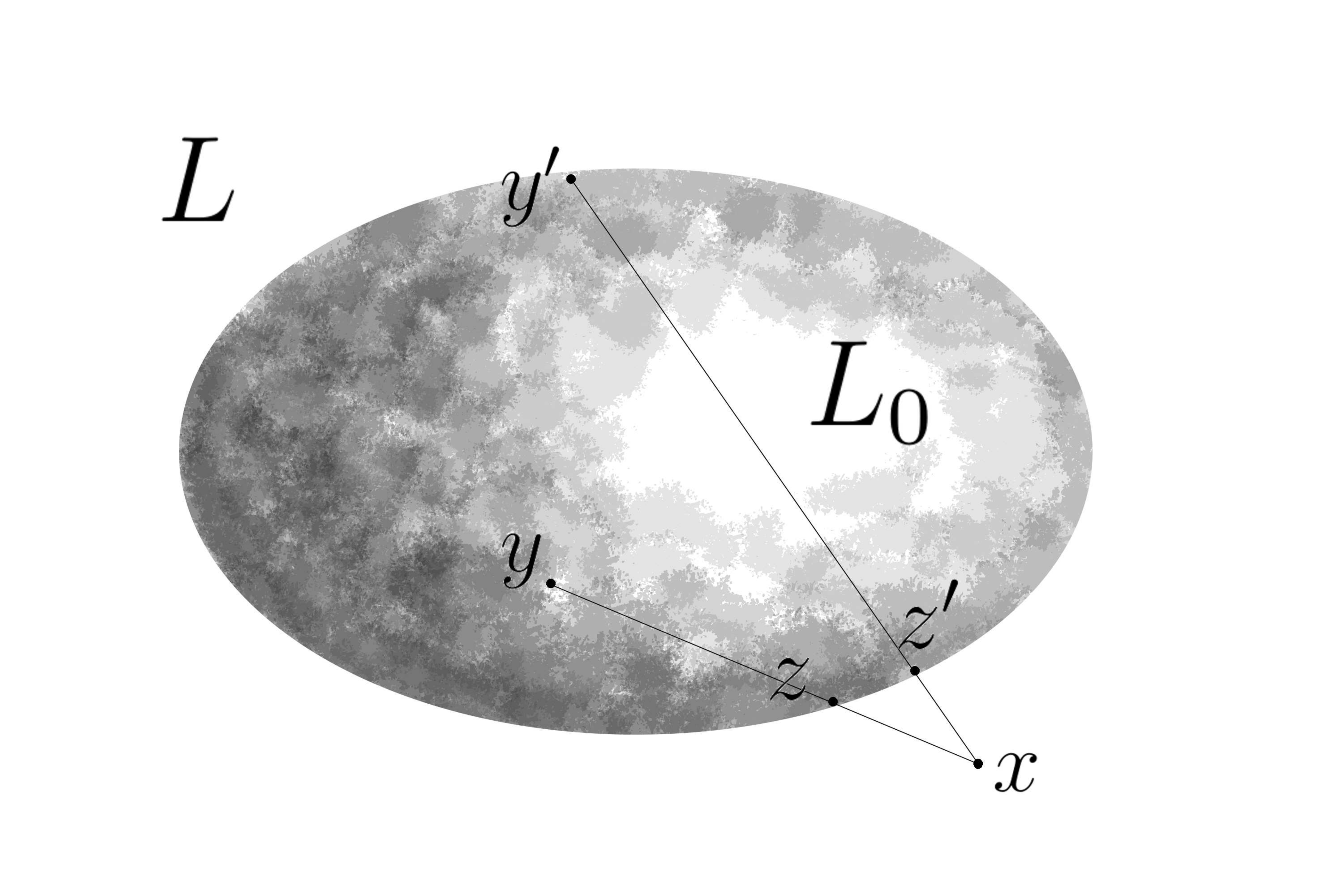}
\caption{\label{kuva2} Consider, as is illustrated, points $x\in L$ and $y,\,y'\in L_0$ situated so that $y'$ is fractionally further away from $x$ than $y$ is in the sense that, when we consider the boundary points $z$ and $z'$ found similarly as in Figure \ref{kuva1}, $x$ has a higher weight in $z'$ than in $z$, i.e.,\ $w(x|y)<w(x|y')$. It follows, as can be seen in the illustration, that these further-away points are located towards the boundary of the `other end' of $L_0$ seen from $x$.}
\end{figure}
\end{center}

In the physical situations, the set of actual selections cannot be described as a convex plane $F$ but rather as a restricted convex subset of a plane, and we are hence interested in distinguishing the useful selections out of the useless selections $y\in L_0$ within restricted sets. Thus, let us fix another convex set $K$, $L_0\subset K\subset F$. In most practical cases, the set $K$ will be compact with respect to some locally convex topology of $V$. Let us restrict the study of the earlier defined relative robustness measures to inputs from $K$ instead of the entire affine subspace $F$. We may define the following measure for an element of $K$ not being in $L_0$:

\begin{definition}\label{def:K-Lrobust}
We define the function $w_L^K:K\to\mb R$,
$$
w_L^K(x)=\sup_{y\in K}w_L(x|y),\qquad x\in K,
$$
and we call the number $w_L^K(x)$ as the {\it (absolute) $(K,L)$-robustness of $x$}.
\end{definition}

\begin{center}
\begin{figure}
\includegraphics[scale=0.1]{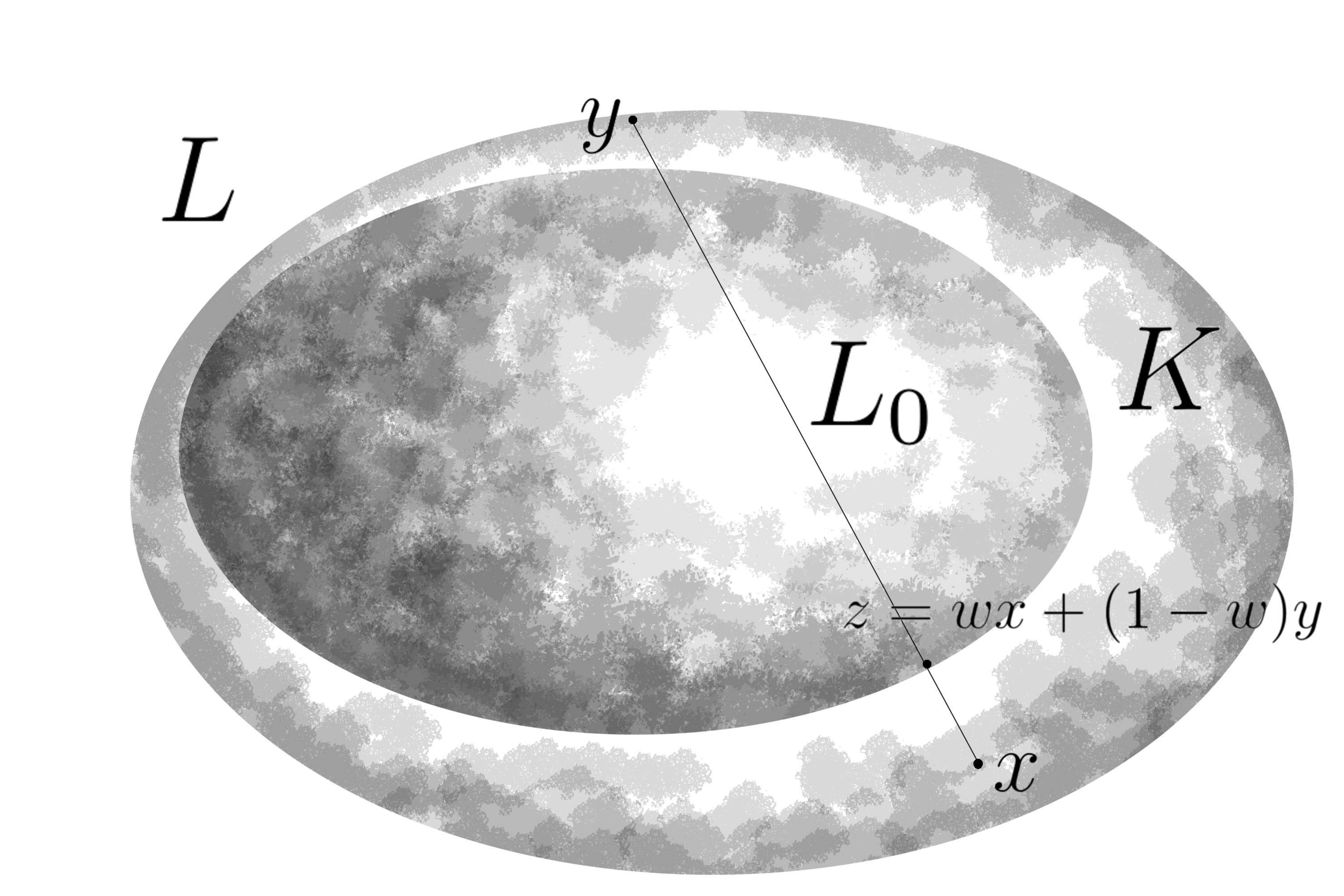}
\caption{\label{kuva3} The absolute $(K,L)$-robustness of a point $x\in K\setminus L_0$ (i.e.,\ $x$ is in the outer layer of the highlighted area in the illustration) is obtained by considering the points $y$ of $K$ being fractionally as far away from $x$ as possible in sense clarified in Figure \ref{kuva2}. However, now we do not require $y$ to be in $L_0$. Clearly, the optimizing $y$ is on the boundary of $K$.}
\end{figure}
\end{center}

The $(K,L)$-robustness $w_L^K$ measures `distances' of elements in $K$ from $L_0$ much in the same way as $w_L$; especially, in the case where $L_0$ and $K$ are compact, $x\in K$ is in $L_0$ if and only if $w_L^K(x)=1$. Again, $1-w_L^K(x)$ is the greatest amount of noise from $K$ that $x$ tolerates without being immersed in $L_0$. For geometric insight into $w_L^K$ and for how $w_L^K$ differs from $w_L$, see Figure \ref{kuva3}. Like $1/w_L$ also $1/w_L^K$ is convex; the proof is essentially the same as the proof for $w_L$. The $(K,L)$-robustness is typically easier to compute and, hence, we mainly concentrate on this measure in our examples in Section \ref{sec:ex}.

\begin{remark}\label{rem:R}
There is another, largely equivalent way to define the robustness measures. For any $x,\,y\in F$, let us define $R_L(x|y)$ as the infimum of the numbers $\lambda\in[0,\infty]$ such that $(1+\lambda)^{-1}(x+\lambda y)\in L_0$, where we set $\inf\emptyset:=\infty$. Hence, we can set up the functions $R_L:F\to\mb R\cup\{\infty\}$ and $R_L^K:K\to\mb R\cup\{\infty\}$,
$$
R_L(x)=\inf_{y\in L_0}R_L(x|y),\qquad R_L^K(x)=\inf_{y\in K}R_L(x|y).
$$
It is immediate that $R_L=1/w_L-1$ and $R_L^K=1/w_L^K-1$. Hence, according to Proposition \ref{prop:robust}, $R_L$ and, similarly, $R_L^K$ are convex. If $L_0$ (and, additionally, $K$) is compact and convex, it follows that $R_L(x)=0$ (resp. $R_L^K=0$) if and only if $x\in L_0$. The measures $R_L$ and $R_L^K$ are, thus, more reminiscent of true measures of distance from $L_0$ than $w_L$ and $w_L^K$, but because of the ease of calculation and clearer geometric meaning we use the measures $w_L$ and $w_L^K$, instead. Of course, all our results concerning the measures $w_L$ and $w_L^K$ are transferable into properties of $R_L$ and $R_L^K$.

The measures $R_L$ and $R_L^K$ have been applied in quantum physics to measure the entanglement of states of combined quantum systems. In this case, $K$ is the set of states of a combined (bipartite) quantum system and $L_0$ is the subset of separable quantum states, i.e.,\ the closed convex hull of product states $\rho_1\otimes\rho_2$. The measure $R_L$, denoted simply $R$ was, in this context, introduced in \cite{VidalTarrach1998} as the {\it robustness of entanglement}. In the sequel, we study the measures $w_L$ and $w_L^K$ in the context of quantum incompatibility.
\end{remark}

\section{The mathematical description of basic quantum devices}\label{sec:math}

In this section we fix complex and separable Hilbert spaces $\hil$ and $\mc K$; we denote the algebra of bounded operators on $\hil$ (respectively $\mc K$) by $\mc L(\hil)$ (respectively $\mc L(\mc K)$). The unit of $\mc L(\hil)$ (the identity operator) is represented by $\id_\hil$, although we usually omit the subscript if there is no danger of confusion. We denote the set of trace class-operators on $\hil$ (respectively on $\mc K$) by $\mc T(\hil)$ (respectively by $\mc T(\mc K)$); when endowed with the trace norm the set of trace-class operators becomes a Banach space. We define $\mc S(\hil)$ (respectively $\mc S(\mc K)$) as the set of positive elements of $\mc T(\hil)$ (respectively of $\mc T(\mc K)$) of trace 1.

Furthermore, $(\Om,\Sigma)$ and $(\Om',\Sigma')$ will be measurable spaces, i.e.,\ $\Om$ (respectively $\Om'$) is a non-empty set and $\Sigma$ (respectively $\Sigma'$) is a $\sigma$-algebra of subsets of $\Om$ (respectively $\Om'$). Additionally we assume that $(\Om,\Sigma)$ and $(\Om',\Sigma')$ are standard Borel (i.e.,\ $\sigma$-isomorphic to the Borel measurable space of a Borel subset of a Polish space); this makes the notion of post-processing introduced later easier to handle.

The quantum state space of a system described by the Hilbert space $\hil$ is identified with $\mc S(\hil)$. An $(\Om,\Sigma)$-valued quantum observable $\ms M$ is an affine map of $\mc S(\hil)$ into the set of probability measures on $(\Om,\Sigma)$, $\rho\mapsto p^{\ms M}_\rho$. The number $p^{\ms M}_\rho(X)$ is the probability of registering an outcome from the set $X\in\Sigma$ in a measurement of $\ms M$ when the system is in state $\rho$.

Hence, an observable is represented by (and, indeed, from now on identified with) a {\it normalized positive-operator-valued measure} (POVM), i.e.,\ an $(\Om,\Sigma)$-valued observable $\ms M$ on a physical system described by $\hil$ is a map $\ms M:\Sigma\to\mc L(\hil)$ such that for any state $\rho\in\mc S(\hil)$ the function $p^{\ms M}_\rho:\Sigma\to\mb R$, $p^{\ms M}_\rho(X)=\tr{\rho\ms M(X)}$, is a probability measure. This means that, as an operator-valued set function, $\ms M$ is weakly $\sigma$-additive, the range $\mr{ran}\,\ms M=\{\ms M(X)\,|\,X\in\Sigma\}$ consists of positive operators, and $\ms M(\Om)=\id_\hil$. A particular class of observables, the set of {\it sharp observables}, is made up of {\it projection-valued measures} (PVMs) $\ms P$ whose range consists solely of projections. We denote the set of $(\Om,\Sigma)$-valued observables of a system described by $\hil$ (identified with POVMs) by ${\bf Obs}(\Sigma,\hil)$.

A transformation of a system associated with the Hilbert space $\hil$ into a system associated with a possibly different Hilbert space $\mc K$ is described by an affine map $\mc E:\mc S(\hil)\to\mc S(\mc K)$ that is completely positive. This means that, for the transposed map $\mc E^*:\mc L(\mc K)\to\mc L(\hil)$,
$$
\sum_{j,k=1}^n\sis{\f_j}{\mc E^*(B_j^*B_k)\f_k}\geq0
$$
for all $n=1,\,2,\ldots$, $\f_1,\ldots,\,\f_n\in\hil$, and $B_1,\ldots,\,B_n\in\mc L(\mc K)$, and $\mc E^*(\id_{\mc K})=\id_\hil$. Such maps are called as {\it channels} and we denote the set of channels $\mc E:\mc S(\hil)\to\mc S(\mc K)$ by ${\bf Ch}(\hil,\mc K)$.

Let $\mc K$ be another Hilbert space, $\ms M\in{\bf Obs}(\Sigma,\hil)$, and $\ms N\in{\bf Obs}(\Sigma,\mc K)$. We say that $\ms N$ is {\it pre-processing of $\ms M$ (by a channel $\mc E\in{\bf Ch}(\hil,\mc K)$)} if $\ms N=\mc E^*\circ\ms M$. In other words, $p^{\ms N}_\rho=p^{\ms M}_{\mc E(\rho)}$. Thus a measurement of $\ms N$ can be implemented by first transforming the system with the channel $\mc E$ and then measuring $\ms M$ on the transformed system.

Assume now that $\ms M\in{\bf Obs}(\Sigma,\hil)$ and $\ms N\in{\bf Obs}(\Sigma',\hil)$. Suppose that $\rho_0$ is a fixed faithful state operator on $\hil$ and denote $\mu=p^{\ms M}_{\rho_0}$. If there exists a function $\beta:\Sigma'\times\Om\to\mb R$ such that
\begin{itemize}
\item[(i)] for any $Y\in\Sigma'$ the function $\beta(Y,\cdot):\Om\to\mb R$ is $\Sigma$-measurable and
\item[(ii)] the set function $\beta(\cdot,\om):\Sigma'\to\mb R$ is a probability measure for $\mu$- almost all $\om\in\Om$
\end{itemize}
(i.e.,\ $\beta$ is a Markov kernel) such that
$$
\ms N(Y)=\int_\Om\beta(Y,\om)\ms M(d\om)
$$
for all $Y\in\Sigma'$, we say that $\ms N$ is a {\it post-processing of $\ms M$ (with the Markov kernel $\beta$)}. We usually write, in this context, $\ms N=\ms M^\beta$. Hence, we may measure $\ms N$ by first measuring $\ms M$ and then processing the outcome probability distribution of $\ms M$ by $\beta$. For more on post-processing (or coarse-graining) especially in finite-outcome settings, see, \cite{MaMu90}. The appropriate generalizations needed in the case involving measurable spaces that are not standard Borel and deeper issues in post-processing are particularly studied in \cite{JePu07, JePuVi08}.

We say that a map $\Phi:\mc T(\hil)\to\mc T(\mc K)$ is an {\it operation} if it is linear, completely positive, and $\tr{\Phi(\rho)}\leq1$ for all $\rho\in\mc S(\hil)$. It follows that an operation is trace-norm continuous. Complete positivity of $\Phi$ means that the (normal) dual map $\Phi^*:\mc L(\mc K)\to\mc L(\hil)$ is completely positive. We call a weakly $\sigma$-additive map $\Sigma\ni X\mapsto\Gamma_X$, where $\Gamma_X$ is an operation for all $X\in\Sigma$ and $\Gamma_\Om\in{\bf Ch}(\hil,\mc K)$, an {\it instrument}. Weak $\sigma$-additivity means that, for any $B\in\mc L(\mc K)$ and $T\in\mc T(\hil)$, the map $X\mapsto\tr{B\Gamma_X(T)}$ is $\sigma$-additive. We denote the set of instruments associated with the measurable space $(\Om,\Sigma)$ and the Hilbert spaces $\hil$ and $\mc K$ as above by ${\bf Ins}(\Sigma,\hil,\mc K)$.

An instrument $\Gamma\in{\bf Ins}(\Sigma,\hil,\mc K)$ is a mathematical description of a measurement process; when the system is in the state $\rho$, $\Gamma_X(\rho)$ is the non-normalized conditional state after the measurement described by $\Gamma$ conditioned by an outcome being measured in the subset $X$ and $\tr{\Gamma_X(\rho)}$ is the probability of registering an outcome from $X$ when the input state of the system is $\rho$. Hence, an instrument combines the description of measurement outcome statistics depending on the input state of the system (observable) with the knowledge of the conditional state changes (with the unconditioned state-change $\Gamma_\Om$ being a channel).

The sets ${\bf Obs}(\Sigma,\hil)$, ${\bf Ch}(\hil,\mc K)$, and ${\bf Ins}(\Sigma,\hil)$ of observables, channels, and instruments are convex, as they should as measurement device sets of a statistical physical theory. As an example,\ for $\Gamma,\,\Gamma'\in{\bf Ins}(\Sigma,\hil)$ and $t\in[0,1]$, the convex combination $t\Gamma+(1-t)\Gamma'\in{\bf Ins}(\Sigma,\hil)$ is defined through $\big(t\Gamma+(1-t)\Gamma'\big)_X(\rho)=t\Gamma_X(\rho)+(1-t)\Gamma'_X(\rho)$ for all $X\in\Sigma$ and $\rho\in\mc S(\hil)$. It is noteworthy that, unlike in classical physical theories, the convex structures in quantum theory allow typically (uncountably) many decompositions into extreme points for states \cite{cassinelli} as well as for measurement devices meaning that a mixture $t\Phi+(1-t)\Psi$ of quantum devices $\Phi$ and $\Psi$ with $0<t<1$ cannot be considered as an ensemble of devices where the device realized would be, in fact, $\Phi$ with probability $t$ and $\Psi$ with probability $1-t$.

\section{Quantum compatibility and incompatibility}\label{sec:comp}

In this section we give a description of compatibility and the complementary notion of incompatibility of the relevant quantum devices, namely, observables and channels. Incompatibility of quantum observables is a well-known issue; see, e.g.,\ review of the subject in \cite{pekka03} and references therein. The compatibility of other types of devices is dealt with earlier, e.g.,\ in \cite{HaHePe13, HeMiRe14}, and our discussion here follows the definitions made in these references. Let us fix Hilbert spaces $\hil$ and $\mc K$ and the standard Borel value spaces $(\Om,\Sigma)$ and $(\Om',\Sigma')$.

We say that observables $\ms M\in{\bf Obs}(\Sigma,\hil)$ and $\ms N\in{\bf Obs}(\Sigma',\hil)$ are {\it compatible} or {\it jointly measurable} if there is a third value space $(\overline{\Om},\overline{\Sigma})$ and an observable $\ms G\in{\bf Obs}(\overline{\Sigma},\hil)$ such that $\ms M$ and $\ms N$ are post-processings of $\ms G$. Joint measurability of $\ms M$ and $\ms N$ means that we may determine the outcome statistics of $\ms M$ and $\ms N$ from the outcome statistics of $\ms G$ by classical means (Markov kernels).\footnote{Here is also the connection of joint measurability of observables and non-existence of steering: steering is not present when one party (Alice) cannot steer the other party's (Bob) state outside the range obtained by post-processings of a fixed ensemble of states over a hidden variable with her measurements (observables). Steering will never be realized independently of the initial state if and only if the observables Alice measures are jointly measurable.} If a pair of observables is not jointly measurable, we say that the observables are {\it incompatible}.

Since the value spaces of the observables we study are assumed to be standard Borel, we find an observable $\ms G\in{\bf Obs}(\Sigma\otimes\Sigma',\hil)$ on the product measurable space $(\Om\times\Om',\Sigma\otimes\Sigma')$ for any pair $\ms M\in{\bf Obs}(\Sigma,\hil)$ and $\ms N\in{\bf Obs}(\Sigma',\hil)$ of jointly measurable observables such that
$$
\ms G(X\times\Om')=\ms M(X),\qquad\ms G(\Om\times Y)=\ms N(Y)
$$
for all $X\in\Sigma$ and $Y\in\Sigma'$. We call such an observable $\ms G$ as a {\it joint observable} for $\ms M$ and $\ms N$. The joint observable of a pair of jointly measurable observables need not be unique. However, if $\ms M$ is an extreme point of ${\bf Obs}(\Sigma,\hil)$ or $\ms N$ is an extreme point of ${\bf Obs}(\Sigma',\hil)$ and $\ms M$ and $\ms N$ are jointly measurable, their joint observable is unique \cite{HaHePe13}.

Note that any observable $\ms M\in{\bf Obs}(\Sigma,\hil)$ is compatible with itself. Indeed, let $(\mc M,\ms P,J)$ be a (not necessarily minimal) Na\u{\i}mark dilation for $\ms M$, where $\mc M$ is a Hilbert space, $\ms P:\Sigma\to\mc L(\mc M)$ is a projection valued measure, and $J:\hil\to\mc M$ is an isometry such that $\ms M(X)=J^*\ms P(X)J$ for all $X\in\Sigma$. The observable $\ms G\in{\bf Obs}(\Sigma\otimes\Sigma,\hil)$, $\ms G(X\times Y)=J^*\ms P(X\cap Y)J$, $X,\,Y\in\Sigma$, is clearly a joint observable for the pair $(\ms M,\ms M)$. We say that an observable $\ms T\in{\bf Obs}(\Sigma,\hil)$ is {\it trivial}, if there is a probability measure $p:\Sigma\to[0,1]$ such that $\ms T(X)=p(X)\id_\hil$ for all $X\in\Sigma$. It is immediate that the trivial observables are compatible with any other observables.

Whenever $\mc F\in{\bf Ch}(\hil,\mc K_1\otimes\mc K_2)$, we may define its {\it marginals} $\mc F_{(j)}\in{\bf Ch}(\hil,\mc K_j)$, $j=1,\,2$, through
$$
\mc F_{(1)}=\mr{tr}_{\mc K_2}[\mc F(\cdot)],\qquad\mc F_{(2)}=\mr{tr}_{\mc K_1}[\mc F(\cdot)]
$$
or, equivalently in the dual form,
$$
\mc F_{(1)}^*(A)=\mc F^*(A\otimes\id_{\mc K_2}),\qquad\mc F_{(2)}^*(B)=\mc F^*(\id_{\mc K_1}\otimes B)
$$
for all $A\in\mc L(\mc K_1)$ and $B\in\mc L(\mc K_2)$. Clearly, the marginals are channels as well. The marginal channels describe, e.g.,\ the reduced dynamics of subsystems.

We say that $\mc E_j\in{\bf Ch}(\hil,\mc K_j)$, $j=1,\,2$, are {\it compatible} if they are marginals of a channel $\mc F\in{\bf Ch}(\hil,\mc K_1\otimes\mc K_2)$, i.e.,\ $\mc E_1=\mc F_{(1)}$ and $\mc E_2=\mc F_{(2)}$. In this case, we call $\mc F$ as a {\it joint channel} of $\mc E_1$ and $\mc E_2$. If a pair of channels is not compatible, we say that they are {\it incompatible}. Compatibility of channels parallels the joint measurability of observables. As with joint measurability, the joint channel of a compatible pair need not be unique. However, a similar sufficient condition for uniqueness can be established as in the case of jointly measurable observables \cite{HaHePe13}. Due to the deeper non-commutativity of channels, a channel may not be compatible with itself. Indeed, the no-cloning principle can be stated in the form that the identity channel $\mr{id}$, $\mr{id}(B)=B$, is not compatible with itself. We say that a channel that is not compatible with itself is {\it self-incompatible}. In a sense, the reason for the fact that (practically) any observable is self-compatible whereas a channel is not is that the output of an observable is classical information that can be copied freely but the quantum output of a channel cannot be copied because of the self-incompatibility of the identity channel.

Finally, we come to the compatibility criterion for a pair of a quantum observable and a channel. Such a pair is compatible if they can be combined in a single measurement, i.e.,\ an instrument. An instrument $\Gamma\in{\bf Ins}(\Sigma,\hil,\mc K)$ has the observable marginal $\Gamma_{(1)}\in{\bf Obs}(\Sigma,\hil)$ and the channel marginal $\Gamma_{(2)}\in{\bf Ch}(\hil,\mc K)$ defined by
$$
\Gamma_{(1)}(X)=\Gamma_X^*(\id_{\mc K}),\qquad\Gamma_{(2)}(\rho)=\Gamma_\Om(\rho)
$$
for all $X\in\Sigma$ and all $\rho\in\mc S(\hil)$. The observable $\Gamma_{(1)}$ is the observable whose measurement is realized by the measurement process described by $\Gamma$ and $\Gamma_{(2)}$ is the unconditioned total state change induced by the measurement.

\begin{center}
\begin{figure}
\includegraphics[scale=0.07]{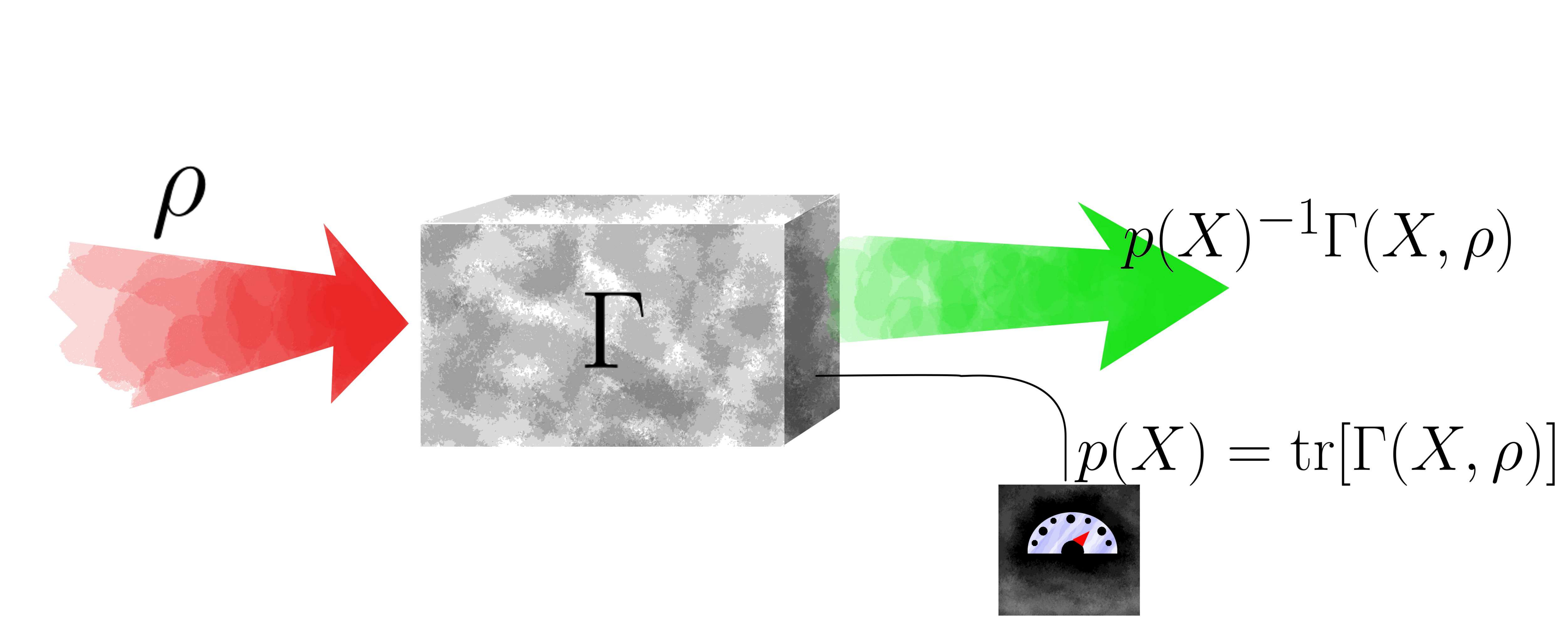}
\caption{\label{kuva4} Illustration of an instrument. The state entering the measurement device represented by the instrument $\Gamma$ is $\rho$. The instrument has the statistics arm (the lower branch right of the instrument box in the illustration) and the state change arm (the upper branch right of the instrument box). When a value is detected with certainty in the set $X$ in the statistics arm, which happens with probability $p(X)$, the state change arm gives the conditional state $\rho_X=p(X)^{-1}\Gamma_X(\rho)$. When the state changes are neglected, the statistics arm reduces to the observable $\Gamma_{(1)}$, and when the statistics are ignored, the state-change arm reduces to the channel $\Gamma_{(2)}$. All joint maps of compatible quantum devices can be visualized in the same way as input-output processors with multiple outputs. Each of the original compatible devices is obtained when all the other output arms are ignored except for the one associated with the particular device.}
\end{figure}
\end{center}

Compatibility of an observable $\ms M\in{\bf Obs}(\Sigma,\hil)$ and a channel $\mc E\in{\bf Ch}(\hil,\mc K)$ means that they can be combined in a single measurement, i.e.,\ there is an instrument $\Gamma\in{\bf Ins}(\Sigma,\hil,\mc K)$ such that $\ms M=\Gamma_{(1)}$ and $\mc E=\Gamma_{(2)}$, as is highlighted in Figure \ref{kuva4}. In this context, we call $\Gamma$ as a {\it joint instrument} for $\ms M$ and $\mc E$. Again, a compatible pair $(\ms M,\mc E)$ typically has infinitely many joint instruments but one sufficient condition for uniqueness of the joint instrument is the extremality of a marginal. A pair $(\ms M,\mc E)$ is defined to be {\it incompatible} if it is not compatible.

\subsection{Robustness of incompatibility}

We fix the sets ${\bf Q}_1$ and ${\bf Q}_2$ of quantum devices that are either observables or channels, i.e.,\ with fixed Hilbert spaces $\hil$, $\mc K$, and $\mc K'$ and value spaces $(\Om,\Sigma)$ and $(\Om',\Sigma')$, ${\bf Q}_1$ is either ${\bf Obs}(\Sigma,\hil)$ or ${\bf Ch}(\hil,\mc K)$ and ${\bf Q}_2$ is either ${\bf Obs}(\Sigma',\hil)$ or ${\bf Ch}(\hil,\mc K')$. We denote the set of compatible pairs within ${\bf Q}_1\times{\bf Q}_2$ by ${\bf Comp}$, i.e.,\ $\Phi_1\in{\bf Q}_1$ and $\Phi_2\in{\bf Q}_2$ are compatible if and only if $(\Phi_1,\Phi_2)\in{\bf Comp}$. The set ${\bf Q}_1\times{\bf Q}_2$ is endowed with a natural convex structure by defining for any $(\Phi_1,\Phi_2),\,(\Psi_1,\Psi_2)\in{\bf Q}_1\times{\bf Q}_2$ and $t\in[0,1]$ the combination $t(\Phi_1,\Phi_2)+(1-t)(\Psi_1,\Psi_2)=\big(t\Phi_1+(1-t)\Psi_1,t\Phi_2+(1-t)\Psi_2\big)$. Whatever the sets of devices involved, the reader may easily check that ${\bf Comp}$ is a convex subset of ${\bf Q}_1\times{\bf Q}_2$.

Denote by $L$ the relative complement of ${\bf Comp}$ with respect to the minimal affine subspace containing ${\bf Comp}$, which coincides with the minimal affine subspace containing ${\bf Q}_1\times{\bf Q}_2$; see Remark \ref{rem:yli1/2} for a proof of this fact. The product ${\bf Q}_1\times{\bf Q}_2$ we denote by $K$. We may define the robustness measures $w_L(\cdot|\cdot)$, $w_L$, and $w_L^K$ introduced in general form in Section \ref{general} for the set of compatible pairs ${\bf Comp}$. For simplicity, we denote $w_L^K(\cdot|\cdot)=:w(\cdot|\cdot)$, $w_L=:w$, and $w_L^K=:W$. Moreover, for any $\Phi_1,\,\Psi_1\in{\bf Q}_1$ and $\Phi_2,\,\Psi_2\in{\bf Q}_2$, we simplify our notations:
\begin{eqnarray*}
w\big((\Phi_1,\Phi_2)|(\Psi_1,\Psi_2)\big)&=:&w(\Phi_1,\Phi_2|\Psi_1,\Psi_2),\\
w\big((\Phi_1,\Phi_2)\big)&=:&w(\Phi_1,\Phi_2),\\
W\big((\Phi_1,\Phi_2)\big)&=:&W(\Phi_1,\Phi_2).
\end{eqnarray*}
When ${\bf Q}_1={\bf Q}_2={\bf Ch}(\hil,\mc K)$, we denote $W(\mc E,\mc E)=W(\mc E)$; this quantity we call as the {\it robustness of self-incompatibility of $\mc E$}.

In the case where ${\bf Q}_1={\bf Obs}(\Sigma,\hil)$ and ${\bf Q}_2={\bf Ch}(\hil,\hil)$, the quantity ${\bf Obs}(\Sigma,\hil)\ni\ms M\mapsto W(\ms M,\mr{id})$, where $\mr{id}$ stands for the identity channel in ${\bf Ch}(\hil,\hil)$, measures how well an approximate version of $\ms M$ can be measured while disturbing the system as little as possible. Equivalently, one may think of the number $W(\ms M,\mr{id})$ as the measure of how disturbing any measurement of the observable $\ms M$ inherently is. In Section \ref{sec:rank1id}, this quantity is calculated in the case of finite-dimensional $\hil$ for any rank-1 sharp observable.

\begin{remark}\label{rem:yli1/2}
In \cite{Busch_etal2013}, it was shown that, whenever $t\leq1/2$, $(\ms A,\ms B)\in{\bf Obs}(\Sigma,\hil)\times{\bf Obs}(\Sigma',\hil)$, and $(\ms S,\ms T)\in{\bf Obs}(\Sigma,\hil)\times{\bf Obs}(\Sigma',\hil)$ is a pair of trivial observables, then $t(\ms A,\ms B)+(1-t)(\ms S,\ms T)\in{\bf JM}(\Sigma,\Sigma',\hil)$, so that
$$
W(\ms A,\ms B)\geq w(\ms A,\ms B|\ms S,\ms T)\geq\frac12.
$$
Hence, $1/2$ is a global lower bound for the robustness $W$. Indeed, the same reasoning as that in \cite{Busch_etal2013} also applies to the compatibility questions involving other devices than observables as well as to the case of continuous observables not discussed in \cite{Busch_etal2013}. Let us study, e.g.,\ the case of observable-channel pairs, i.e.,\ ${\bf Q}_1={\bf Obs}(\Sigma,\hil)$ and ${\bf Q}_2={\bf Ch}(\hil,\mc K)$. Pick any probability measure $p$ on $(\Om,\Sigma)$ and any state $\sigma\in\mc S(\mc K)$ defining the associated trivial observable $\ms T_p=p(\cdot)\id_\hil$ and the constant channel $\mc T_\sigma:\rho\mapsto\sigma$. With any $\ms M\in{\bf Obs}(\Sigma,\hil)$, $\mc E\in{\bf Ch}(\hil,\mc K)$, and $t\in[0,1]$, one may set up the instrument $\Gamma\in{\bf Ins}(\Sigma,\hil,\mc K)$,
$$
\Gamma_X(\rho)=t\tr{\rho\ms M(X)}\sigma+(1-t)p(X)\mc E(\rho),\qquad\rho\in\mc S(\hil).
$$
The observable marginal of $\Gamma$ is easily found to be $t\ms M+(1-t)\ms T_p$ and the channel marginal is $t\mc T_\sigma+(1-t)\mc E$. Especially it follows that fixing any pair $(\ms T_p,\mc T_\sigma)$ like that above, one has that $\frac12(\ms M+\ms T_p)$ and $\frac12(\mc E+\mc T_\sigma)$ are compatible. Similar result is easily proven for channel-channel pairs. It thus follows that there is $x_0\in{\bf Comp}$ such that $\frac12(x+x_0)\in{\bf Comp}$ for all $x\in{\bf Q}_1\times{\bf Q}_2$ implying $W\geq1/2$ for any device pair

It also follows that the minimal affine subspace $F'$ containing ${\bf Comp}$ coincides with the minimal affine subspace $F$ containing ${\bf Q}_1\times{\bf Q}_2$. Indeed, trivially, $F'\subset F$. For the reversed inclusion, let us pick any $x_0\subset{\bf Comp}$ such that $\frac12(x+x_0)\in{\bf Comp}$ for all $x\in{\bf Q}_1\times{\bf Q}_2$. Let $z\in F$ meaning that there are $x,\,y\in{\bf Q}_1\times{\bf Q}_2$ and $\lambda\geq0$ such that $z=x_0+\lambda(x-y)$. Defining $x':=\frac12(x+x_0)\in{\bf Comp}$ and $y':=\frac12(y+x_0)\in{\bf Comp}$, it follows $z=x_0+2\lambda(x'-y')\in F'$ implying $F\subset F'$.

Following \cite{Heinosaari_etal2014}, in the case for observables operating in a $d$-dimensional ($d<\infty$) Hilbert space one can give an even tighter bound for robustness of incompatibility for observable pairs:
$$
W(\ms A,\ms B)\geq\frac{2+d}{2(1+d)}.
$$
One easily sees, using similar techniques as in \cite{Heinosaari_etal2014} that the above inequality holds also for the robustness measures $W$ involving quantum devices other than observables.
\end{remark}

\subsection{Ordering properties}\label{sec:robobs}

In this subsection we discuss some of the special features of the robustness measures for incompatibility. We will find out that the robustness measure behaves monotonically under certain orderings of device pairs. In the sequel, whenever $(\Phi,\Psi)$ is a device pair and we write $W(\Phi,\Psi)$ (or $w(\Phi,\Psi)$) we implicitly assume that the base set ${\bf Q}_1\times{\bf Q}_2\ni(\Phi,\Psi)$ contains compatible pairs, i.e.,\ if $\Phi$ and $\Psi$ are observables, they operate in the same Hilbert space, if $\Phi$ and $\Psi$ are channels, their input spaces coincides, and if $\Phi$ is an observable and $\Psi$ is a channel, then $\Phi$ operates in the input space of the channel $\Psi$.

Let us fix the Hilbert spaces $\hil$ and $\mc K$ and four standard Borel measurable spaces $(\Om,\Sigma)$, $(\Om',\Sigma')$, $(\tilde\Om,\tilde\Sigma)$, and $(\tilde\Om',\tilde\Sigma')$. We denote the subset of those pairs $(\ms M,\ms N)\in{\bf Obs}(\Sigma,\hil)\times{\bf Obs}(\Sigma',\hil)$ that are jointly measurable by ${\bf JM}(\Sigma,\Sigma',\hil)$. We define the corresponding sets of compatible observable pairs also for other pairs of value spaces.

Suppose that $\ms M\in{\bf Obs}(\Sigma,\mc K)$ is a pre-processing of $\ms A\in{\bf Obs}(\Sigma,\hil)$ and $\ms N\in{\bf Obs}(\Sigma',\mc K)$ is a pre-processing of $\ms B\in{\bf Obs}(\Sigma',\hil)$, both with the {\it same} channel $\mc E\in{\bf Ch}(\hil,\mc K)$, i.e.,\ $\ms M=\mc E^*\circ\ms A$ and $\ms N=\mc E^*\circ\ms B$. In this case, we write $(\ms M,\ms N)\leq_\mr{prae}(\ms A,\ms B)$. The relation $\leq_\mr{prae}$ is a pre-order in the set of observable pairs. We denote $(\ms A,\ms B)=_\mr{prae}(\ms M,\ms N)$ if $(\ms M,\ms N)\leq_\mr{prae}(\ms A,\ms B)$ and $(\ms A,\ms B)\leq_\mr{prae}(\ms M,\ms N)$ and say that the pairs $(\ms A,\ms B)$ and $(\ms M,\ms N)$ are {\it pre-processing equivalent}.

Another pre-order in the set of observable pairs is defined by post-processing: If $\tilde{\ms M}\in{\bf Obs}(\tilde\Sigma,\hil)$ is a post-processing of $\ms A\in{\bf Obs}(\Sigma,\hil)$ and $\tilde{\ms N}\in{\bf Obs}(\tilde\Sigma',\hil)$ is a post-processing of $\ms B\in{\bf Obs}(\Sigma',\hil)$ with possibly different Markov kernels, we write $(\tilde{\ms M},\tilde{\ms N})\leq_\mr{post}(\ms A,\ms B)$. We define the {\it post-processing equivalence} $=_\mr{post}$ for pairs of observables in the same way as the pre-processing equivalence.

Suppose that $(\ms A,\ms B)\in{\bf JM}(\Sigma,\Sigma',\hil)$ have a joint observable $\ms G$. Then $\mc E^*\circ\ms G$ is a joint observable for $(\mc E^*\circ\ms A,\mc E^*\circ\ms B)$ for any channel $\mc E$. This means that, if the pair $(\ms A,\ms B)$ is jointly measurable and $(\tilde{\ms M},\tilde{\ms N})\leq_\mr{prae}(\ms A,\ms B)$, then $(\tilde{\ms M},\tilde{\ms N})$ is jointly measurable. Moreover, it follows immediately from the definition of joint measurability that, if $(\ms A,\ms B)$ is jointly measurable and $(\ms M,\ms N)\leq_\mr{post}(\ms A,\ms B)$, then $(\ms M,\ms N)$ is jointly measurable.


We may show that the $W$-robustness measure of incompatibility has the following properties. With a slight modification of the proof given here, one easily shows that these results also hold for $w$.

\begin{theorem}\label{theor:properties}
Let $(\ms A,\ms B)$, $(\ms M,\ms N)$, and $(\tilde{\ms M},\tilde{\ms N})$ be observable pairs.
\begin{itemize}
\item[(a)] If $(\ms M,\ms N)\leq_\mr{prae}(\ms A,\ms B)$, then $W(\ms M,\ms N)\geq W(\ms A,\ms B)$, and if $(\ms M,\ms N)=_\mr{prae}(\ms A,\ms B)$, then $W(\ms M,\ms N)=W(\ms A,\ms B)$.
\item[(b)] If $(\tilde{\ms M},\tilde{\ms N})\leq_\mr{post}(\ms A,\ms B)$, then $W(\tilde{\ms M},\tilde{\ms N})\geq W(\ms A,\ms B)$, and if $(\tilde{\ms M},\tilde{\ms N})=_\mr{post}(\ms A,\ms B)$, then $W(\tilde{\ms M},\tilde{\ms N})=W(\ms A,\ms B)$.
\end{itemize}
\end{theorem}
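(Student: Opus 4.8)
The plan is to isolate a single monotonicity principle and apply it twice. First I would record, using Definitions~\ref{def:relrob}, \ref{def:robust} and \ref{def:K-Lrobust} together with the shorthand $L_0={\bf Comp}$ for the convex set of compatible pairs, that
$$
W(\ms A,\ms B)=\sup\bigl\{t\in[0,1]\ \big|\ \exists\,(\ms S,\ms T)\in{\bf Obs}(\Sigma,\hil)\times{\bf Obs}(\Sigma',\hil):\ t(\ms A,\ms B)+(1-t)(\ms S,\ms T)\in{\bf Comp}\bigr\},
$$
so that $W(\ms A,\ms B)$ is the largest weight $t$ for which some mixture of $(\ms A,\ms B)$ with an observable-pair noise is jointly measurable. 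The principle I would use is: if $\Lambda$ is an affine map from the base pair-space of $(\ms A,\ms B)$ into the base pair-space of the processed pair that sends jointly measurable pairs to jointly measurable pairs, then $W\bigl(\Lambda(\ms A,\ms B)\bigr)\geq W(\ms A,\ms B)$. The argument is immediate: whenever $t(\ms A,\ms B)+(1-t)(\ms S,\ms T)\in{\bf Comp}$, affinity gives $t\,\Lambda(\ms A,\ms B)+(1-t)\,\Lambda(\ms S,\ms T)=\Lambda\bigl(t(\ms A,\ms B)+(1-t)(\ms S,\ms T)\bigr)$, which is again jointly measurable, and $\Lambda(\ms S,\ms T)$ lies in the target base space, so the same $t$ witnesses $W\bigl(\Lambda(\ms A,\ms B)\bigr)\geq t$; taking the supremum over all such $t$ finishes it.

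For item~(a) I would take $\Lambda(\ms A,\ms B)=(\mc E^*\circ\ms A,\mc E^*\circ\ms B)$, i.e.\ simultaneous pre-processing by the fixed channel $\mc E$. This map is manifestly affine, it sends observable pairs to observable pairs since $\mc E^*$ is a unital completely positive normal map, and it was observed just before the theorem that $\mc E^*\circ\ms G$ is a joint observable for $(\mc E^*\circ\ms A,\mc E^*\circ\ms B)$ whenever $\ms G$ is one for $(\ms A,\ms B)$; hence $\Lambda$ carries ${\bf JM}(\Sigma,\Sigma',\hil)$ into ${\bf JM}(\Sigma,\Sigma',\mc K)$. The principle then gives $W(\ms M,\ms N)\geq W(\ms A,\ms B)$ when $(\ms M,\ms N)\leq_\mr{prae}(\ms A,\ms B)$, and if moreover $(\ms M,\ms N)=_\mr{prae}(\ms A,\ms B)$ I would apply this inequality in both directions to obtain $W(\ms M,\ms N)=W(\ms A,\ms B)$.

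For item~(b) I would fix Markov kernels $\beta$, $\gamma$ realizing $\tilde{\ms M}=\ms A^\beta$ and $\tilde{\ms N}=\ms B^\gamma$ and take $\Lambda(\ms A',\ms B')=\bigl((\ms A')^\beta,(\ms B')^\gamma\bigr)$. For fixed kernels this is affine in $(\ms A',\ms B')$, maps ${\bf Obs}(\Sigma,\hil)\times{\bf Obs}(\Sigma',\hil)$ into ${\bf Obs}(\tilde\Sigma,\hil)\times{\bf Obs}(\tilde\Sigma',\hil)$, and, since joint measurability is preserved under post-processing as noted just before the theorem, it sends jointly measurable pairs to jointly measurable pairs; the principle again yields $W(\tilde{\ms M},\tilde{\ms N})\geq W(\ms A,\ms B)$, with equality under $=_\mr{post}$ by the same two-sided argument. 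The point that needs care, and which I expect to be the only real obstacle, is that a Markov kernel $\beta$ with $\tilde{\ms M}=\ms A^\beta$ is required to be a probability measure in its first slot only for $p^{\ms A}_{\rho_0}$-almost every $\om$, so a priori it need not post-process the arbitrary observables $\ms A'$, nor the convex combinations appearing in the mixing. I would repair this by redefining $\beta(\cdot,\om)$ to be a fixed point mass on the exceptional set $E$: faithfulness of $\rho_0$ and $p^{\ms A}_{\rho_0}(E)=0$ force $\ms A(E)=0$, so $\ms A^\beta$ is unchanged while $\beta$ becomes a genuine everywhere-defined Markov kernel that legitimately post-processes every observable on $\hil$; after doing the same for $\gamma$, the map $\Lambda$ is defined on the whole base set and the argument above applies verbatim. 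Finally, the identical reasoning handles the absolute $L$-robustness $w$ in place of $W$: one only has to observe that in each case $\Lambda$ additionally maps $L_0={\bf Comp}$ into the compatible pairs of the target space, so that the noises $\Lambda(\ms S,\ms T)$ remain admissible when the supremum is restricted to $L_0$ rather than $K$.
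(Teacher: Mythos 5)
Your proposal is correct and follows essentially the same route as the paper: both proofs take a witness decomposition $t(\ms A,\ms B)+(1-t)(\ms S,\ms T)\in{\bf Comp}$ and push it forward through the (affine, compatibility-preserving) pre- or post-processing map to obtain a witness for the processed pair, with equality under $=_\mr{prae}$ or $=_\mr{post}$ by symmetry. Your abstraction into a single monotonicity principle for affine maps preserving joint measurability, and your explicit repair of the almost-everywhere-defined Markov kernel so that it post-processes arbitrary noise observables, are harmless refinements of what the paper does implicitly.
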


\begin{proof}
Clearly, if $W(\ms A,\ms B)=0$, the first claim in item (a) needs no proof. Assume, hence, that $W(\ms A,\ms B)>0$; in fact, according to Remark \ref{rem:yli1/2}, the robustness measure is always bounded from below by $1/2$. Suppose that $(\ms M,\ms N)\leq_\mr{prae}(\ms A,\ms B)$, the pre-processing carried out by a channel $\mc E\in{\bf Ch}(\hil,\mc K)$ and $t<W(\ms A,\ms B)$, and let $(\ms A_1,\ms B_1)\in{\bf Obs}(\Sigma,\hil)\times{\bf Obs}(\Sigma',\hil)$ and $(\ms A_2,\ms B_2)\in{\bf JM}(\Sigma,\Sigma',\hil)$ be such that
$$
t(\ms A,\ms B)+(1-t)(\ms A_1,\ms B_1)=(\ms A_2,\ms B_2).
$$
Let now $\ms M_r=\mc E^*\circ\ms A_r$ and $\ms N_r=\mc E^*\circ\ms B_r$ for $r=1,\,2$. Hence, $(\ms M_2,\ms N_2)$ is a compatible pair. It follows immediately (as one may check) that one may write
$$
t(\ms M,\ms N)+(1-t)(\ms M_1,\ms N_1)=(\ms M_2,\ms N_2),
$$
from which the first claim of item (a) follows as one lets $t\uparrow W(\ms A,\ms B)$. The second claim of item (a) follows from symmetry.

In the proof of item (b), we may again restrict to the case where $W(\ms A,\ms B)>0$. Assume now that $(\tilde{\ms M},\tilde{\ms N})\leq_\mr{post}(\ms A,\ms B)$, so that there are Markov kernels $\beta$ and $\gamma$ such that $\tilde{\ms M}=\ms A^\beta$ and $\tilde{\ms N}=\ms B^\gamma$, and $t\leq W(\ms A,\ms B)$, and let $(\ms A_1,\ms B_1)\in{\bf Obs}(\Sigma,\hil)\times{\bf Obs}(\Sigma',\hil)$ and $(\ms A_2,\ms B_2)\in{\bf JM}(\Sigma,\Sigma',\hil)$ be as above. Now, $(\ms A_2^\beta,\ms B_2^\gamma)$ is a compatible pair, and one may write
$$
t(\tilde{\ms M},\tilde{\ms N})+(1-t)(\ms A_1^\beta,\ms B_1^\gamma)=(\ms A_2^\beta,\ms B_2^\gamma)
$$
proving the first claim of item (b) as one lets $t\uparrow W(\ms A,\ms B)$. The second claim of item (b) is proven by symmetry again.
\end{proof}

Especially, if the observables $\ms A\in{\bf Obs}(\Sigma,\hil)$ and $\ms B\in{\bf Obs}(\Sigma',\hil)$ are post-processing maximal (i.e.,\ rank-1 observables), then the pair $(\ms A,\ms B)$ minimizes the robustness measure $W$, i.e., a pair of post-processing maximal observables require the greatest amount of noise to be added in order to be rendered jointly measurable.

Let now $\hil,\,\hil',\,\mc K_1,\,\mc K_2,\,\mc K_1'$, and $\mc K_2'$ be Hilbert spaces. We denote, e.g.,\ for the spaces $\hil$, $\mc K_1$, and $\mc K_2$, the set of compatible pairs in ${\bf Ch}(\hil,\mc K_1)\times{\bf Ch}(\hil,\mc K_2)$ by ${\bf Comp}(\hil,\mc K_1,\mc K_2)$.

For channels $\mc E'\in{\bf Ch}(\hil',\mc K_1)$, $\mc F'\in{\bf Ch}(\hil',\mc K_2)$, $\mc E\in{\bf Ch}(\hil,\mc K_1)$, and $\mc F\in{\bf Ch}(\hil,\mc K_2)$, we denote $(\mc E',\mc F')\leq_\mr{prae}(\mc E,\mc F)$ if there is a channel $\mc G\in{\bf Ch}(\hil',\hil)$ such that $\mc E'=\mc E\circ\mc G$ and $\mc F'=\mc F\circ\mc G$. This gives rise to the partial order $\leq_\mr{prae}$ associated to pre-processing of channel pairs. We denote by $=_\mr{prae}$ the corresponding equivalence relation. Suppose that the $(\mc E,\mc F)\in{\bf Comp}(\hil,\mc K_1,\mc K_2)$ has the joint channel $\mc M\in{\bf Ch}(\hil,\mc K_1\otimes\mc K_2)$ and pick $\mc G\in{\bf Ch}(\hil',\hil)$. It follows that also the pair $(\mc E\circ\mc G,\mc F\circ\mc G)$ is compatible since it has (among others) the joint channel $\mc M\circ\mc G$. This means that, whenever the pair $(\mc E,\mc F)$ is compatible and $(\mc E',\mc F')\leq_\mr{prae}(\mc E,\mc F)$, then also $(\mc E',\mc F')$ is compatible.

When $\mc E'\in{\bf Ch}(\hil,\mc K_1')$, $\mc F'\in{\bf Ch}(\hil,\mc K_2')$, $\mc E\in{\bf Ch}(\hil,\mc K_1)$, and $\mc F\in{\bf Ch}(\hil,\mc K_2)$, we denote $(\mc E',\mc F')\leq_\mr{post}(\mc E,\mc F)$ if there are channels $\mc A\in{\bf Ch}(\mc K_1,\mc K_1')$ and $\mc B\in{\bf Ch}(\mc K_2,\mc K_2')$ such that $\mc E'=\mc A\circ\mc E$ and $\mc F'=\mc B\circ\mc F$. We denote the equivalence relation corresponding to the partial order $\leq_\mr{post}$ by $=_\mr{post}$. If $(\mc E,\mc F)\in{\bf Comp}(\hil,\mc K_1,\mc K_2)$ has the joint channel $\mc M\in{\bf Ch}(\hil,\mc K_1\otimes\mc K_2)$ and we choose $\mc A\in{\bf Ch}(\mc K_1,\mc K_1')$ and $\mc B\in{\bf Ch}(\mc K_2,\mc K_2')$ we may define the joint channel $(\mc A\otimes\mc B)\circ\mc M$ for the pair $(\mc A\circ\mc E,\mc B\circ\mc F)$. Thus, whenever the pair $(\mc E,\mc F)$ is compatible and $(\mc E',\mc F')\leq_\mr{post}(\mc E,\mc F)$, then also $(\mc E',\mc F')$ is compatible.

As for observables, we may easily prove the following (the robustness measure $w$ possesses the same properties):
\begin{theorem}\label{theor:prop2}
Let $(\mc E,\mc F)$, $(\mc C,\mc D)$, and $(\mc C',\mc D')$ be channel pairs.
\begin{itemize}
\item[(a)] If $(\mc C,\mc D)\leq_\mr{prae}(\mc E,\mc F)$, then $W(\mc C,\mc D)\geq W(\mc E,\mc F)$, and if $(\mc C,\mc D)=_\mr{prae}(\mc E,\mc F)$, then $W(\mc C,\mc D)=W(\mc E,\mc F)$.
\item[(b)] If $(\mc C',\mc D')\leq_\mr{post}(\mc E,\mc F)$, then $W(\mc C',\mc D')\geq W(\mc E,\mc F)$, and if $(\mc C',\mc D')=_\mr{post}(\mc E,\mc F)$, then $W(\mc C',\mc D')=W(\mc E,\mc F)$.
\end{itemize}
\end{theorem}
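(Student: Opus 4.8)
The plan is to run the argument of Theorem~\ref{theor:properties} essentially verbatim, with the pre-processing operation ``compose the observables with $\mc E^*$'' replaced by ``compose the channels on the right with a fixed channel $\mc G$'', and the post-processing operation ``act with Markov kernels'' replaced by ``compose the channels on the left with fixed channels $\mc A$ and $\mc B$''. The only structural facts needed are the two already recorded above: a pre-processing (resp. post-processing) of a compatible channel pair is again compatible, with joint channel $\mc M\circ\mc G$ (resp. $(\mc A\otimes\mc B)\circ\mc M$) manufactured from a joint channel $\mc M$ of the original pair; together with the elementary observation that right-composition with a fixed channel, and left-composition with fixed channels, are \emph{linear} (hence affine) operations on device pairs, so that they commute with the convex combinations occurring in the definition of the robustness.

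For item~(a), we may assume $W(\mc E,\mc F)>0$ (by Remark~\ref{rem:yli1/2} it is in fact $\geq 1/2$). Let $\mc G\in{\bf Ch}(\hil',\hil)$ implement $(\mc C,\mc D)=(\mc E\circ\mc G,\mc F\circ\mc G)$ and fix $t<W(\mc E,\mc F)$. Choose a noise pair $(\mc E_1,\mc F_1)\in{\bf Ch}(\hil,\mc K_1)\times{\bf Ch}(\hil,\mc K_2)$ and a compatible pair $(\mc E_2,\mc F_2)\in{\bf Comp}(\hil,\mc K_1,\mc K_2)$ with $t(\mc E,\mc F)+(1-t)(\mc E_1,\mc F_1)=(\mc E_2,\mc F_2)$. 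Setting $\mc C_r=\mc E_r\circ\mc G$ and $\mc D_r=\mc F_r\circ\mc G$ for $r=1,2$ --- these are again channels with input space $\hil'$ --- linearity of right-composition gives $t(\mc C,\mc D)+(1-t)(\mc C_1,\mc D_1)=(\mc C_2,\mc D_2)$, while $(\mc C_2,\mc D_2)\leq_\mr{prae}(\mc E_2,\mc F_2)$ is compatible. Hence $w(\mc C,\mc D|\mc C_1,\mc D_1)\geq t$, so $W(\mc C,\mc D)\geq t$; letting $t\uparrow W(\mc E,\mc F)$ proves the inequality, and applying it in both directions proves the equality claim. Item~(b) is identical with $\mc A\in{\bf Ch}(\mc K_1,\mc K_1')$, $\mc B\in{\bf Ch}(\mc K_2,\mc K_2')$ implementing $(\mc C',\mc D')=(\mc A\circ\mc E,\mc B\circ\mc F)$ and with $\mc C_r'=\mc A\circ\mc E_r$, $\mc D_r'=\mc B\circ\mc F_r$; here compatibility of $(\mc C_2',\mc D_2')$ follows from $(\mc C_2',\mc D_2')\leq_\mr{post}(\mc E_2,\mc F_2)$. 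The corresponding statements for $w=w_L$ follow from the very same construction: if the noise pair $(\mc E_1,\mc F_1)$ is taken from ${\bf Comp}$ rather than from the full base set, then its image under the (compatibility-preserving) pre- or post-processing again lies in ${\bf Comp}$, hence is an admissible reference point for $w$.

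I expect no genuine obstacle here; the only point to watch is pure bookkeeping, namely checking that after composition with $\mc G$ (resp. $\mc A,\mc B$) the noise pair and the resulting compatible pair still belong to the correct base set ${\bf Q}_1\times{\bf Q}_2$ over the updated Hilbert spaces, so that $W(\mc C,\mc D)$ (resp. $W(\mc C',\mc D')$) is defined and the displayed affine identities are honest identities of channel pairs --- which is immediate since a composition of channels is a channel and composition is affine. A minor technical point, already tacitly used in the proof of Theorem~\ref{theor:properties}, is that for $t<W(\mc E,\mc F)$ one should strictly speaking first pick a reference pair $y\in K$ with $w(\mc E,\mc F|y)>t$ and then some $s\in(t,w(\mc E,\mc F|y)]$ with $s(\mc E,\mc F)+(1-s)y\in{\bf Comp}$, but this changes nothing once $t$ is sent to $W(\mc E,\mc F)$.
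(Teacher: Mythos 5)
Your proposal is correct and follows exactly the route the paper intends: the paper states Theorem \ref{theor:prop2} without a separate proof, remarking only that it is proved "as for observables," i.e.\ by transplanting the argument of Theorem \ref{theor:properties} using the already-established facts that pre- and post-processing preserve compatibility of channel pairs and act affinely on convex combinations. Your bookkeeping remarks (base sets over the updated Hilbert spaces, the supremum technicality, and the $w$ variant) are all consistent with what the paper tacitly assumes.
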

Thus, especially, we have $W(\mc E,\mc F)\geq W(\mr{id})$ for any $\mc E\in{\bf Ch}(\hil,\mc K_1)$ and $\mc F\in{\bf Ch}(\hil,\mc K_2)$, where $\mr{id}\in{\bf Ch}(\hil,\hil)$ is the identity channel, i.e.,\ the pair $(\mr{id},\mr{id})$ is the most incompatible pair of channels with respect to the robustness measures. The robustness measure $W$ (as well as $w$) attains the same minimal value at any channel pair in the post-processing equivalence class determined by the identity channel pair $(\mr{id},\mr{id})$, $\mr{id}\in{\bf Ch}(\hil,\hil)$. Clearly, a pair $(\mc E,\mc F)\in{\bf Ch}(\hil,\mc K_1)\times{\bf Ch}(\hil,\mc K_2)$ is in this equivalence class when they are left-invertible by channels, i.e.,\ there are channels $\mc A\in{\bf Ch}(\mc K_1,\hil)$ and $\mc B\in{\bf Ch}(\mc K_2,\hil)$ such that $\mc A\circ\mc E=\mc B\circ\mc F=\mr{id}$. From now on, we call such channels {\it decodable}. As a special case of \cite[Corollary 1]{JencovaPetz2006}, when $\hil$ and $\mc K$ are finite dimensional, a channel $\mc E\in{\bf Ch}(\hil,\mc K)$ is decodable if and only if there is a Hilbert space $\mc K_0$, a unitary operator $U:\hil\otimes\mc K_0\to\mc K$, and a positive trace-1 operator $T$ on $\mc K_0$ such that $\mc E(\rho)=U(\rho\otimes T)U^*$ for all $\rho\in\mc S(\hil)$. It follows that a channel with unitarily equivalent input and output spaces is decodable if and only if it is a unitary channel, i.e.,\ of the form $\rho\mapsto U\rho U^*$ with a unitary operator $U$. The decodable channels posses essentially the same properties with respect to the robustness measures as the identity channel.

Let us fix Hilbert spaces $\hil,\,\hil',\,\mc K$, and $\mc K'$ and standard Borel spaces $(\Om,\Sigma)$ and $(\Om',\Sigma')$. We denote, e.g.,\ for $\hil,\,\mc K$, and $\Sigma$, by ${\bf Comp}(\Sigma,\hil,\mc K)$ the set of compatible pairs in ${\bf Obs}(\Sigma,\hil)\times{\bf Ch}(\hil,\mc K)$.

We denote $(\ms M',\mc E')\leq_\mr{prae}(\ms M,\mc E)$ for $\ms M'\in{\bf Obs}(\Sigma,\hil')$, $\mc E'\in{\bf Ch}(\hil',\mc K)$, $\ms M\in{\bf Obs}(\Sigma,\hil)$, and $\mc E\in{\bf Ch}(\hil,\mc K)$ if there is $\mc G\in{\bf Ch}(\hil',\hil)$ such that $\ms M'=\mc G^*\circ\ms M$ and $\mc E'=\mc E\circ\mc G$. Again, the equivalence relation corresponding to the partial order $\leq_\mr{prae}$ is $=_\mr{prae}$. It is easy to see that, if $(\ms M,\mc E)$ is a compatible pair and $(\ms M',\mc E')\leq_\mr{prae}(\ms M,\mc E)$, then $(\ms M',\mc E')$ is compatible as well.

If $\ms M'=\ms M^\beta$ and $\mc E'=\mc B\circ\mc E$, where $\ms M'\in{\bf Obs}(\Sigma',\hil)$, $\mc E'\in{\bf Ch}(\hil,\mc K')$, $\ms M\in{\bf Obs}(\Sigma,\hil)$, and $\mc E\in{\bf Ch}(\hil,\mc K)$, for a channel $\mc B\in{\bf Ch}(\mc K,\mc K')$ and a Markov kernel $\beta:\Sigma'\times\Om\to\mb R$, we denote $(\ms M',\mc E')\leq_\mr{post}(\ms M,\mc E)$. The equivalence relation associated with the partial order $\leq_\mr{post}$ is denoted by $=_\mr{post}$. Suppose that $(\ms M,\mc E)\in{\bf Comp}(\Sigma,\hil,\mc K)$ has the joint instrument $\Gamma\in{\bf Ins}(\Sigma,\hil,\mc K)$ and pick $\mc B\in{\bf Ch}(\mc K,\mc K')$ and a Markov kernel $\beta:\Sigma'\times\Om\to\mb R$. It is straight-forward to check that $\Gamma'\in{\bf Ins}(\Sigma',\hil,\mc K')$,
$$
\Gamma'_Y(\rho)=\mc B\bigg(\int_\Om\beta(Y,\om)\Gamma_{d\om}(\rho)\bigg),\qquad Y\in\Sigma',\quad\rho\in\mc S(\hil),
$$
is a joint instrument for $(\ms M^\beta,\mc B\circ\mc E)$ implying that, whenever the pair $(\ms M,\mc E)$ is compatible and $(\ms M',\mc E')\leq_\mr{post}(\ms M,\mc E)$, then $(\ms M',\mc E')$ is compatible as well.

Again, one easily proves the following properties (which also hold for $w$):
\begin{theorem}\label{theor:prop3}
Let $(\ms M,\mc E)$, $(\ms N,\mc F)$, and $(\ms N',\mc F')$ be observable-channel pairs.
\begin{itemize}
\item[(a)] If $(\ms N,\mc F)\leq_\mr{prae}(\ms M,\mc E)$, then $W(\ms N,\mc F)\geq W(\ms M,\mc E)$, and if $(\ms N,\mc F)=_\mr{prae}(\ms M,\mc E)$, then $W(\ms N,\mc F)=W(\ms M,\mc E)$.
\item[(b)] If $(\ms N',\mc F')\leq_\mr{post}(\ms M,\mc E)$, then $W(\ms N',\mc F')\geq W(\ms M,\mc E)$, and if $(\ms N',\mc F')=_\mr{post}(\ms M,\mc E)$, then $W(\ms N',\mc F')=W(\ms M,\mc E)$.
\end{itemize}
\end{theorem}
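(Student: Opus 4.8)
The proof follows the exact same template as the proofs of Theorems \ref{theor:properties} and \ref{theor:prop2}, so the main task is to identify the correct constructions that transport a convex decomposition witnessing a value $t<W(\ms M,\mc E)$ through the pre-processing and post-processing operations. I would begin with item (a). Suppose $(\ms N,\mc F)\leq_\mr{prae}(\ms M,\mc E)$ via a channel $\mc G\in{\bf Ch}(\hil',\hil)$, so that $\ms N=\mc G^*\circ\ms M$ and $\mc F=\mc E\circ\mc G$. Pick $t<W(\ms M,\mc E)$ and choose $(\ms M_1,\mc E_1)\in{\bf Obs}(\Sigma,\hil)\times{\bf Ch}(\hil,\mc K)$ and a compatible pair $(\ms M_2,\mc E_2)\in{\bf Comp}(\Sigma,\hil,\mc K)$ with $t(\ms M,\mc E)+(1-t)(\ms M_1,\mc E_1)=(\ms M_2,\mc E_2)$. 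Setting $\ms N_r=\mc G^*\circ\ms M_r$ and $\mc F_r=\mc E_r\circ\mc G$ for $r=1,2$, the pair $(\ms N_2,\mc F_2)=(\mc G^*\circ\ms M_2,\mc E_2\circ\mc G)\leq_\mr{prae}(\ms M_2,\mc E_2)$ is compatible by the observation preceding the theorem; since pre-processing by $\mc G$ (i.e.\ $\ms M\mapsto\mc G^*\circ\ms M$ and $\mc E\mapsto\mc E\circ\mc G$) is affine in each argument, applying it to the displayed convex identity yields $t(\ms N,\mc F)+(1-t)(\ms N_1,\mc F_1)=(\ms N_2,\mc F_2)$. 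Letting $t\uparrow W(\ms M,\mc E)$ gives $W(\ms N,\mc F)\geq W(\ms M,\mc E)$, and the equality statement for $=_\mr{prae}$ follows by symmetry.

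For item (b), assume $(\ms N',\mc F')\leq_\mr{post}(\ms M,\mc E)$, so $\ms N'=\ms M^\beta$ and $\mc F'=\mc B\circ\mc E$ for a Markov kernel $\beta:\Sigma'\times\Om\to\mb R$ and a channel $\mc B\in{\bf Ch}(\mc K,\mc K')$. Pick $t<W(\ms M,\mc E)$ and $(\ms M_1,\mc E_1)$, $(\ms M_2,\mc E_2)\in{\bf Comp}(\Sigma,\hil,\mc K)$ as before. The pair $(\ms M_2^\beta,\mc B\circ\mc E_2)\leq_\mr{post}(\ms M_2,\mc E_2)$ is compatible by the construction given just before the theorem (the explicit joint instrument $\Gamma'_Y(\rho)=\mc B(\int_\Om\beta(Y,\om)\Gamma_{d\om}(\rho))$ built from a joint instrument $\Gamma$ of $(\ms M_2,\mc E_2)$). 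Again post-processing by $(\beta,\mc B)$ — namely $\ms M\mapsto\ms M^\beta$, which is linear since $\ms M^\beta(Y)=\int_\Om\beta(Y,\om)\ms M(d\om)$, and $\mc E\mapsto\mc B\circ\mc E$, which is linear — is affine in each argument, so applying it to $t(\ms M,\mc E)+(1-t)(\ms M_1,\mc E_1)=(\ms M_2,\mc E_2)$ produces $t(\ms N',\mc F')+(1-t)(\ms M_1^\beta,\mc B\circ\mc E_1)=(\ms M_2^\beta,\mc B\circ\mc E_2)$. Letting $t\uparrow W(\ms M,\mc E)$ yields $W(\ms N',\mc F')\geq W(\ms M,\mc E)$, and the equality for $=_\mr{post}$ follows by symmetry.

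The only genuinely content-bearing points — that pre-processing preserves compatibility of observable-channel pairs and that post-processing does likewise via the explicit $\Gamma'$ — have already been verified in the text immediately preceding the statement, so nothing substantive remains. The main (minor) obstacle is purely bookkeeping: one must check that the affine operations in question really are affine in the pair $(\ms M,\mc E)$, which for pre-processing amounts to $\mc G^*$ being linear and $\mc G$ composed on the right being linear, and for post-processing amounts to $\beta$-integration and left-composition with $\mc B$ being linear — all immediate. One small subtlety worth a remark: in item (a) one uses $t<W$ strict (since the witnessing $(\ms M_2,\mc E_2)$ may only exist for $t$ strictly below the supremum), whereas in item (b) the argument works verbatim with $t\le W$ when the relevant sets are compact and the supremum is attained; either way the limit $t\uparrow W(\ms M,\mc E)$ closes the estimate. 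As the preamble to the theorem notes, replacing $W=w_L^K$ by $w=w_L$ throughout changes nothing, since the noise elements $(\ms M_1,\mc E_1)$ and $(\ms M_1^\beta,\mc B\circ\mc E_1)$ automatically remain in the relevant base set.
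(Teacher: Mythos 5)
Your proof is correct and follows exactly the template the paper intends: the paper omits the proof of this theorem, remarking only that it is proved "easily" in the same way as Theorem \ref{theor:properties}, and your argument is precisely that adaptation — transport a near-optimal convex decomposition through the (affine) pre- or post-processing maps and invoke the compatibility-preservation facts established in the text immediately before the statement. No gaps.
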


Theorems \ref{theor:properties}, \ref{theor:prop2}, and \ref{theor:prop3} together tell that instead of considering robustness measures as functions on individual device pairs, they can be defined on pre- or post-processing equivalence classes. The partial orders invoked by pre- and post-processing in the set of observables and their meaning are studied, e.g.,\ in \cite{esijarj, heinonen05}. The results of this section also tell that the measure $R:=1/W-1$ (as well as $1/w-1$) is an incompatibility monotone from the perspective of \cite{HeKiRe15}. The operations defining the preorders $\leq_\mr{prae}$ and $\leq_\mr{post}$, common pre-processing and independent bipartite post-processing, can be naturally viewed as compatibility non-decreasing maps and any incompatibility measure should naturally behave monotonously under these operations. Monotonicity under $\leq_\mr{prae}$ has been required already in \cite{HeKiRe15, Pusey15}.

\section{Examples}\label{sec:ex}

In the remainder of this article, we calculate the robustness of incompatibility $W$ for three special cases: the finite dimensional Weyl pair, the pair of decodable, hence especially unitary, channels, and the pair consisting of a rank-1 sharp observable (von Neumann observable) and a decodable channel. In each case, the quantity $R=1/W-1$ measures how well the pair resists joining under noise. Hence, in the first case, we essentially determine the overall resistance to joint measuring of a finite-dimensional `position-momentum' pair. In the second case, we find how well (or how poorly) we may approximately combine a pair of decodable channels in a single channel. The third case enlightens the issue of how close can the total state change associated with an approximate measurement of a von Neumann observable be to an information-preserving channel

\subsection{Robustness of incompatibility for a sharp Weyl pair}\label{sec:weyl}

In this section, we calculate the robustness of incompatibility for a particular pair of incompatible observables: a finite-dimensional Weyl pair. Let us fix a $d$-dimensional Hilbert space $\hil$ ($d<\infty$) which has the orthonormal base $\{\f_j\}_{j\in\mb Z_d}$. We denote $\dual{j}{k}=e^{i2\pi jk/d}$ for all $j,\,k\in\mb Z_d$ and define the linear operator $\mc F\in\mc L(\hil)$ through
\begin{equation}\label{eq:fourier}
\mc F\f_j=\frac{1}{\sqrt{d}}\sum_{i\in\mb Z_d}\ovl{\dual{i}{j}}\f_i,\qquad j\in\mb Z_d.
\end{equation}
This operator is the Fourier-operator and its adjoint is defined through
$$
\mc F^*\f_j=\frac{1}{\sqrt{d}}\sum_{i\in\mb Z_d}\dual{i}{j}\f_i,\qquad j\in\mb Z_d.
$$

For simplicity, we denote by ${\bf Obs}_d$ the set of observables operating in $\hil$ whose value space is $\mb Z_d$ (equipped with its power set as the $\sigma$-algebra). Hence, an observable $\ms M\in{\bf Obs}_d$ is defined by the values $\ms M(\{j\}):=\ms M_j\in\mc L(\hil)$, $j\in\mb Z_d$, and we write $\ms M=(\ms M_j)_{j\in\mb Z_d}$. We denote the set of compatible pairs in ${\bf Obs}_d\times{\bf Obs}_d$ by ${\bf JM}_d$. We denote the set of $\mb Z_d\times\mb Z_d$-valued observables operating in $\hil$ (the possible joint observables for the compatible pairs $(\ms M,\ms N)\in{\bf JM}_d$) by ${\bf Obs}_{d\times d}$. When $\ms G\in{\bf Obs}_{d\times d}$, we set $\ms G_{j,k}:=\ms G\big(\{(j,k)\}\big)$ for all $j,\,k\in\mb Z_d$.

We fix another orthonormal basis $\{\psi_k\}_{k\in\mb Z_d}$ by setting $\psi_k=\mc F^*\f_k$. It follows that $\sis{\f_j}{\psi_k}=d^{-1/2}$ for all $j,\,k\in\mb Z_d$, so that the bases $\{\f_j\}$ and $\{\psi_k\}$ are an example of a pair of mutually unbiased bases. Let us denote
$$
\ms Q_j:=|\f_j\ra\la\f_j|,\quad\ms P_k:=|\psi_k\ra\la\psi_k|,\qquad j,\,k\in\mb Z_d
$$
and define the sharp observables $\ms Q:=(\ms Q_j)_{j\in\mb Z_d}\in{\bf Obs}_d$ and $\ms P:=(\ms P_k)_{k\in\mb Z_d}\in{\bf Obs}_d$.

For each $q,\,p\in\mb Z_d$, we may define the operators $U_q,\,V_p,\,W_{q,p}\in\mc L(\hil)$ through
\begin{eqnarray}
U_q\f_j&=&\f_{j+q}, \label{eq:Usiirto}\\
V_p\f_j&=&\dual{j}{p}\f_j, \label{eq:Vsiirto}\\
W_{q,p}&=&U_qV_p, \label{eq:Weyl}
\end{eqnarray}
where the sums and differences are considered as cyclic on $\mb Z_d$. Thus $(q,p)\mapsto W_{q,p}$ is a projective unitary representation of $\mb Z_d\times\mb Z_d$ in $\hil$ which we call as the {\it $d$-dimensional Weyl representation}. It follows that
$$
W_{q,p}^*\ms Q_jW_{q,p}=\ms Q_{j-q},\quad W_{q,p}^*\ms P_kW_{q,p}=\ms P_{k-p},\qquad j,\,k,\,q,\,p\in\mb Z_d,
$$
i.e.,\ $\ms Q$ and $\ms P$ are {\it Weyl-covariant}.

We denote the set of all Weyl-covariant pairs $(\ms A,\ms B)\in{\bf Obs}_d\times{\bf Obs}_d$, i.e.,
\begin{equation}\label{eq:Wkov}
W_{q,p}^*\ms A_jW_{q,p}=\ms A_{j-q},\quad W_{q,p}^*\ms B_kW_{q,p}=\ms B_{k-p},\qquad j,\,k,\,q,\,p\in\mb Z_d,
\end{equation}
by ${\bf Obs}_{d\times d}^W$. Any Weyl-covariant pair $(\ms A,\ms B)$, where $\ms A$ and $\ms B$ are sharp, is unitarily equivalent with the fixed pair $(\ms Q,\ms P)$ in the sense that there is a unitary operator $U$ on $\hil$ such that $\ms A_j=U^*\ms Q_jU$ and $\ms B_k=U^*\ms P_kU$ for all $j,\,k\in\mb Z_d$. From now on, we call the pair $(\ms Q,\ms P)$ as the {\it Weyl pair}.

If $(\ms M,\ms N)\in{\bf Obs}_{d\times d}^W$, there are probability distributions $\mu=(\mu_j)_{j\in\mb Z_d}$ and $\nu=(\nu_k)_{k\in\mb Z_d}$ such that $\ms M=\mu*\ms Q$ and $\ms N=\nu*\ms P$, i.e.,
\begin{equation}\label{eq:convolution}
\ms M_j=\sum_{q\in\mb Z_d}\mu_{j-q}\ms Q_q,\quad\ms N_k=\sum_{p\in\mb Z_d}\nu_{k-p}\ms P_p
\end{equation}
for all $j,\,k\in\mb Z_d$. Moreover, such a Weyl-covariant pair is jointly measurable if and only if there is a state $\rho\in\mc S(\hil)$ such that
\begin{equation}\label{eq:tilaehto}
\mu_j=\tr{\rho\ms Q_{-j}},\quad\nu_k=\tr{\rho\ms P_{-k}},\qquad j,\,k\in\mb Z_d.
\end{equation}
The latter condition can also be written using a purification $\eta\in\hil\otimes\hil$ of $\rho$, so that
\begin{equation}\label{eq:vektoriehto}
\mu_j=\sis{\eta}{(\ms Q_{-j}\otimes\id)\eta},\quad\nu_k=\sis{\eta}{(\ms P_{-k}\otimes\id)\eta},\qquad j,\,k\in\mb Z_d.
\end{equation}
For proofs of these facts about Weyl-covariant pairs, we refer to \cite{Heinosaari_etal2012}.

The following lemma is useful for evaluating the robustness of incompatibility for any Weyl-covariant pair.

\begin{lemma}\label{lemma:covopt}
Let $(\ms A,\ms B)\in{\bf Obs}_{d\times d}^W$. One has
$$
W(\ms A,\ms B)=\sup_{(\ms M,\ms N)\in{\bf Obs}_{d\times d}^W}w(\ms A,\ms B|\ms M,\ms N).
$$
\end{lemma}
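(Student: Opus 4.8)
The plan is to exploit symmetrization: the Weyl group acts by conjugation on observable pairs, the set of compatible pairs is invariant under this action, and averaging over the group is a linear contraction that maps a general competitor $(\ms M,\ms N)\in{\bf Obs}_d\times{\bf Obs}_d$ to a Weyl-covariant one without decreasing the relative robustness of the (already covariant) pair $(\ms A,\ms B)$. First I would define, for an arbitrary pair $(\ms M,\ms N)\in{\bf Obs}_d\times{\bf Obs}_d$, the averaged pair
$$
\overline{\ms M}_j=\frac{1}{d^2}\sum_{q,p\in\mb Z_d}W_{q,p}\ms M_{j+q}W_{q,p}^*,\qquad
\overline{\ms N}_k=\frac{1}{d^2}\sum_{q,p\in\mb Z_d}W_{q,p}\ms N_{k+p}W_{q,p}^*,
$$
and check that $(\overline{\ms M},\overline{\ms N})\in{\bf Obs}_{d\times d}^W$ (this is routine: positivity and normalization are preserved under unitary conjugation and convex combination, and the covariance relations \eqref{eq:Wkov} follow from the group law of the projective representation, exactly the index shift that appears in the covariance relations for $\ms Q$ and $\ms P$).

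Next I would show $w(\ms A,\ms B\,|\,\ms M,\ms N)\le w(\ms A,\ms B\,|\,\overline{\ms M},\overline{\ms N})$. Suppose $t\in[0,1]$ is such that $t(\ms A,\ms B)+(1-t)(\ms M,\ms N)\in{\bf JM}_d$, with joint observable $\ms G\in{\bf Obs}_{d\times d}$. Because $\ms A,\ms B$ are Weyl-covariant, conjugating by $W_{q,p}$ and reindexing gives, for each $(q,p)$, that $W_{q,p}\big(t\ms A+(1-t)\ms M\big)W_{q,p}^*$ (suitably shifted) is again a convex combination of $t\ms A$ (shifted) and a conjugated-shifted copy of $\ms M$; the corresponding conjugated-and-shifted $\ms G$ is a joint observable for it. Averaging these $d^2$ joint observables over $(q,p)$ produces, by linearity of the marginal map and convexity of ${\bf JM}_d$, a joint observable for $t(\ms A,\ms B)+(1-t)(\overline{\ms M},\overline{\ms N})$. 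Hence that averaged pair is jointly measurable, so $t\le w(\ms A,\ms B\,|\,\overline{\ms M},\overline{\ms N})$; taking the supremum over admissible $t$ gives the claimed inequality. Since $(\overline{\ms M},\overline{\ms N})\in{\bf Obs}_{d\times d}^W\subset{\bf Obs}_d\times{\bf Obs}_d=K$, this shows
$$
W(\ms A,\ms B)=\sup_{(\ms M,\ms N)\in K}w(\ms A,\ms B\,|\,\ms M,\ms N)
=\sup_{(\ms M,\ms N)\in{\bf Obs}_{d\times d}^W}w(\ms A,\ms B\,|\,\ms M,\ms N),
$$
the reverse inequality being trivial because ${\bf Obs}_{d\times d}^W\subset K$.

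The main obstacle, I expect, is the bookkeeping in the second step: one must track carefully how conjugation by $W_{q,p}$ interacts with the outcome-label shifts so that, after averaging, the $\ms A$- and $\ms B$-parts reassemble into exactly $t\ms A$ and $t\ms B$ (using their covariance) while the noise parts reassemble into $\overline{\ms M}$ and $\overline{\ms N}$, and so that the averaged $\ms G$ really is a legitimate $\mb Z_d\times\mb Z_d$-valued observable with the correct two marginals. Everything else — invariance of ${\bf JM}_d$ under simultaneous conjugation by a single unitary, linearity of taking marginals, convexity of the set of joint observables — is standard. One could alternatively phrase the averaging via the single unitary $W_{q,p}\otimes\id$ acting on a joint observable and invoke that a convex combination of joint observables for members of a convex family of pairs is a joint observable for the averaged pair, which may streamline the reindexing.
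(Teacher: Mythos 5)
Your proposal is correct and is essentially the paper's own proof: average the competitor pair and any joint observable over the Weyl group, observe that taking marginals commutes with the averaging so that ${\bf JM}_d$ is preserved, and use that the covariant pair $(\ms A,\ms B)$ is a fixed point of the (linear) averaging map. The only slip is the convention you yourself flagged: with the covariance as defined in \eqref{eq:Wkov}, the average must be $\overline{\ms M}_j=\frac{1}{d^2}\sum_{q,p\in\mb Z_d}W_{q,p}^{*}\ms M_{j+q}W_{q,p}$ (conjugation by $W_{q,p}^{*}$ together with the shift $j+q$, as in the paper), since your form $W_{q,p}\ms M_{j+q}W_{q,p}^{*}$ produces an observable satisfying $W_{q,p}^{*}\overline{\ms M}_jW_{q,p}=\overline{\ms M}_{j+q}$, i.e.\ the oppositely indexed covariance, so it would not lie in ${\bf Obs}_{d\times d}^{W}$ as defined.
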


\begin{proof}
Let us first define a map ${\bf Obs}_{d\times d}\ni\ms G\mapsto\ms G^W\in{\bf Obs}_{d\times d}$ by setting
$$
\ms G^W_{j,k}=\frac{1}{d^2}\sum_{q,\,p\in\mb Z_d}W_{q,p}^*\ms G_{j+q,k+p}W_{q,p},\quad j,\,k\in\mb Z_d.
$$
It follows (as one may easily check) that
$$
W_{q,p}^*\ms G^W_{j,k}W_{q,p}=\ms G^W_{j-q,k-p}
$$
for all $j,\,k,\,q,\,p\in\mb Z_d$. Similarly for any $\ms M\in{\bf Obs}_d$, we define $\ms M^{W,1},\,\ms M^{W,2}\in{\bf Obs}_d$ through
\begin{eqnarray*}
\ms M^{W,1}_j&=&\frac{1}{d^2}\sum_{q,p\in\mb Z_d}W_{q,p}^*\ms M_{j+q}W_{q,p},\quad j\in\mb Z_d\\
\ms M^{W,2}_k&=&\frac{1}{d^2}\sum_{q,p\in\mb Z_d}W_{q,p}^*\ms M_{k+p}W_{q,p},\quad k\in\mb Z_d
\end{eqnarray*}
It follows that, when $\ms G\in{\bf Obs}_{d\times d}$ is a joint observable for $(\ms M,\ms N)\in{\bf JM}_d$, i.e.,\ $\sum_k\ms G_{j,k}=\ms M_j$ and $\sum_j\ms G_{j,k}=\ms N_k$, then $(\ms M^{W,1},\ms N^{W,2})\in{\bf JM}_d$ and this pair has (among others) the joint observable $\ms G^W$. Furthermore, $(\ms M^{W,1},\ms N^{W,2})\in{\bf Obs}_{d\times d}^W$ for any $\ms M,\,\ms N\in{\bf Obs}_d$, and, if $(\ms A,\ms B)\in{\bf Obs}_{d\times d}^W$, then $\ms A^{W,1}=\ms A$ and $\ms B^{W,2}=\ms B$.

Suppose now that $t<W(\ms A,\ms B)$ and let $(\ms M,\ms N)\in{\bf Obs}_d\times{\bf Obs}_d$, and $(\tilde{\ms M},\tilde{\ms N})\in{\bf JM}_d$ be such that
$$
t(\ms A,\ms B)+(1-t)(\ms M,\ms N)=(\tilde{\ms M},\tilde{\ms N}).
$$
Since $(\ms A,\ms B)$ is Weyl covariant, it follows that
$$
t(\ms A,\ms B)+(1-t)(\ms M^{W,1},\ms N^{W,2})=(\tilde{\ms M}^{W,1},\tilde{\ms N}^{W,2}),
$$
where $(\tilde{\ms M}^{W,1},\tilde{\ms N}^{W,2})\in{\bf JM}_d$, since,\ if $\ms G$ is a joint observable for $(\tilde{\ms M},\tilde{\ms N})$, then $\ms G^W$ is a joint observable for $(\tilde{\ms M}^{W,1},\tilde{\ms N}^{W,2})$. Hence, for all $t<W(\ms A,\ms B)$, we find $(\ms M,\ms N)\in{\bf Obs}_{d\times d}^W$ and $(\tilde{\ms M},\tilde{\ms N})\in{\bf JM}_d\cap{\bf Obs}_{d\times d}^W$ such that
$$
t(\ms A,\ms B)+(1-t)(\ms M,\ms N)=(\tilde{\ms M},\tilde{\ms N}),
$$
and the claim is proven.
\end{proof}

\begin{theorem}\label{theor:Wrob}
The robustness of incompatibility for the sharp Weyl pair is given by
\begin{equation}\label{eq:Wrob}
W(\ms Q,\ms P)=\frac12\Big(1+\frac{1}{\sqrt{d}}\Big).
\end{equation}
\end{theorem}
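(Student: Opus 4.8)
The plan is to combine Lemma~\ref{lemma:covopt} with the characterization of jointly measurable Weyl-covariant pairs in \eqref{eq:convolution}--\eqref{eq:tilaehto}, which reduces the computation of $W(\ms Q,\ms P)$ to a short optimization over states. First I would use Lemma~\ref{lemma:covopt} to write $W(\ms Q,\ms P)=\sup_{(\ms M,\ms N)\in{\bf Obs}_{d\times d}^W}w(\ms Q,\ms P|\ms M,\ms N)$, so that it suffices to consider noise pairs $\ms M=\mu*\ms Q$, $\ms N=\nu*\ms P$ with $\mu,\nu$ probability distributions on $\mb Z_d$. For such a pair and $t\in[0,1]$ the convex combination $t(\ms Q,\ms P)+(1-t)(\ms M,\ms N)$ is again Weyl-covariant, with distributions $\tilde\mu=t\delta_0+(1-t)\mu$ and $\tilde\nu=t\delta_0+(1-t)\nu$ (here $\delta_0$ is the point mass at $0$), since $\ms Q=\delta_0*\ms Q$ and convolution is bilinear. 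By \eqref{eq:tilaehto} this combination lies in ${\bf JM}_d$ if and only if there is a state $\rho\in\mc S(\hil)$ with $\tilde\mu_j=\tr{\rho\ms Q_{-j}}$ and $\tilde\nu_k=\tr{\rho\ms P_{-k}}$ for all $j,k\in\mb Z_d$.

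For the upper bound I would fix a covariant noise pair; since ${\bf JM}_d$ is closed (by compactness of ${\bf Obs}_{d\times d}$), the supremum $t^\ast:=w(\ms Q,\ms P|\ms M,\ms N)$ is attained, so there is a state $\rho$ realizing the equalities above with $t=t^\ast$. Evaluating at $j=k=0$ gives $t^\ast+(1-t^\ast)\mu_0=\sis{\f_0}{\rho\f_0}$ and $t^\ast+(1-t^\ast)\nu_0=\sis{\psi_0}{\rho\psi_0}$; as $\mu_0,\nu_0\geq0$, this yields $t^\ast\leq\min\{\sis{\f_0}{\rho\f_0},\sis{\psi_0}{\rho\psi_0}\}\leq\tfrac12\tr{\rho(|\f_0\ra\la\f_0|+|\psi_0\ra\la\psi_0|)}\leq\tfrac12\big\||\f_0\ra\la\f_0|+|\psi_0\ra\la\psi_0|\big\|$. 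The operator norm of a sum of two rank-1 projections $P_1+P_2$ equals $1+\sqrt{\tr{P_1P_2}}$ (its nonzero eigenvalues on the two-dimensional span are $1\pm\sqrt{\tr{P_1P_2}}$), and with $\tr{|\f_0\ra\la\f_0|\,|\psi_0\ra\la\psi_0|}=|\sis{\f_0}{\psi_0}|^2=1/d$ we get $t^\ast\leq\tfrac12(1+1/\sqrt d)$. Taking the supremum over covariant noise pairs gives $W(\ms Q,\ms P)\leq\tfrac12(1+1/\sqrt d)$.

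For the matching lower bound I would exhibit a noise pair saturating $t=\tfrac12(1+1/\sqrt d)$. Since $\sis{\f_0}{\psi_0}=1/\sqrt d$ is real and positive, set $\chi=(\f_0+\psi_0)/\|\f_0+\psi_0\|$ and $\rho=|\chi\ra\la\chi|$; a direct computation gives $\sis{\f_0}{\rho\f_0}=\sis{\psi_0}{\rho\psi_0}=\tfrac12(1+1/\sqrt d)=:t$ (the vector $\chi$ is precisely the top eigenvector of the projection sum above). Define $\tilde\mu_j=\tr{\rho\ms Q_{-j}}$ and $\tilde\nu_k=\tr{\rho\ms P_{-k}}$, which are probability distributions with $\tilde\mu_0=\tilde\nu_0=t$; then $\mu:=(\tilde\mu-t\delta_0)/(1-t)$ and $\nu:=(\tilde\nu-t\delta_0)/(1-t)$ are again probability distributions, and the covariant pair $(\ms M,\ms N)=(\mu*\ms Q,\nu*\ms P)$ satisfies $t(\ms Q,\ms P)+(1-t)(\ms M,\ms N)=(\tilde\mu*\ms Q,\tilde\nu*\ms P)\in{\bf JM}_d$ by \eqref{eq:tilaehto} with the state $\rho$. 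Hence $w(\ms Q,\ms P|\ms M,\ms N)\geq t$, so $W(\ms Q,\ms P)\geq\tfrac12(1+1/\sqrt d)$, which together with the upper bound proves \eqref{eq:Wrob}.

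The main obstacle I anticipate lies in the upper bound: one must ensure that the relevant supremum over $t$ is attained (so that a genuine witnessing state $\rho$ exists) and that restricting \eqref{eq:tilaehto} to the single outcome $0$ loses nothing. The latter works here precisely because $\ms Q$ and $\ms P$ are Weyl-covariant, so after the covariant reduction of Lemma~\ref{lemma:covopt} a shift by $W_{q,p}$ makes probing one point of the distributions as informative as probing all of them. The only quantitative input is the identity $\big\||\f_0\ra\la\f_0|+|\psi_0\ra\la\psi_0|\big\|=1+1/\sqrt d$, which is exactly the mutual unbiasedness $|\sis{\f_0}{\psi_0}|^2=1/d$ in disguise, and every inequality in the chain is saturated by $\chi\propto\f_0+\psi_0$.
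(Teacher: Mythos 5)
Your proposal is correct, and the lower-bound construction is in fact identical to the paper's (your $\rho=|\chi\ra\la\chi|$ with $\chi\propto\f_0+\psi_0$ is the reduced state of the paper's optimal purification $\eta$, and your residual noise distributions come out to $\mu_j=\nu_j=1/(d-1)$ for $j\neq0$, exactly as in the paper). The interesting difference is in the upper bound. The paper follows the route of \cite{Heinosaari_etal2012}: from the purification $\eta$ of the witnessing state it derives the two coupled Cauchy--Schwarz inequalities (\ref{eq:epayht1})--(\ref{eq:epayht2}) involving $\alpha=\sum_{j\neq0}\sqrt{\mu_j}$ and $\beta=\sum_{k\neq0}\sqrt{\nu_k}$, and then optimizes over $\mu_0,\nu_0,\alpha,\beta$. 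You instead keep only the $j=k=0$ instances of (\ref{eq:tilaehto}), observe that positivity of the noise weights forces $t\leq\tr{\rho\,\ms Q_0}$ and $t\leq\tr{\rho\,\ms P_0}$, and close the argument with the elementary identity $\big\||\f_0\ra\la\f_0|+|\psi_0\ra\la\psi_0|\big\|=1+|\sis{\f_0}{\psi_0}|=1+1/\sqrt d$. This is a genuinely shorter derivation of the same tight bound; what it gives up is the extra information the paper's inequalities carry about non-optimal noise (e.g.\ the explicit trade-off when $\mu_0,\nu_0>0$), but for computing $W(\ms Q,\ms P)$ that information is not needed since your witness saturates every inequality in your chain. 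One cosmetic remark: you do not actually need ${\bf JM}_d$ to be closed or the supremum in $w(\ms Q,\ms P|\ms M,\ms N)$ to be attained for the upper bound --- the inequality $t\leq\frac12(1+1/\sqrt d)$ holds for every admissible $t$, hence for their supremum --- though in finite dimensions the closedness claim is true and harmless.
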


\begin{proof}
Let $t<W(\ms Q,\ms P)$ and, using Lemma \ref{lemma:covopt}, suppose that $(\ms M,\ms N)\in{\bf Obs}_{d\times d}^W$ and $(\ms A,\ms B)\in{\bf JM}_d$ are such that
$$
(\ms A,\ms B)=t(\ms Q,\ms P)+(1-t)(\ms M,\ms N).
$$
Clearly, $(\ms A,\ms B)\in{\bf Obs}_{d\times d}^W$. Let $\mu$ and $\nu$ be probability distributions such that $\ms M$ and $\ms N$ are given by (\ref{eq:convolution}). Define $\delta$ to be the probability distribution having $\delta_0=1$ (and, of course, $\delta_j=0$ for $j\neq 0$). One may write
$$
\ms A=(t\delta+(1-t)\mu)*\ms Q,\quad \ms B=(t\delta+(1-t)\nu)*\ms P.
$$
Hence, there has to be $\eta\in\hil\otimes\hil$ such that
\begin{eqnarray}
\sis{\eta}{(\ms Q_{-j}\otimes\id)\eta}&=&t\delta_j+(1-t)\mu_j,\quad j\in\mb Z_d \label{eq:abehto1}\\
\sis{\eta}{(\ms P_{-k}\otimes\id)\eta}&=&t\delta_k+(1-t)\nu_k,\quad k\in\mb Z_d. \label{eq:abehto2}
\end{eqnarray}

We may write $\eta=\sum_j\f_j\otimes\zeta_j=\sum_k\psi_k\otimes\xi_k$ for some $\zeta_j,\,\xi_k\in\hil$, $j,\,k\in\mb Z_d$. Following the procedure carried out in the proof of \cite[Lemma 1]{Heinosaari_etal2012}, one obtains the (tight) inequalities
\begin{eqnarray}
\sqrt{t(1-\nu_0)+\nu_0}&\leq&\frac{1}{\sqrt{d}}\big(\sqrt{t(1-\mu_0)+\mu_0}+\alpha\sqrt{1-t}\big) \label{eq:epayht1}\\
\sqrt{t(1-\mu_0)+\mu_0}&\leq&\frac{1}{\sqrt{d}}\big(\sqrt{t(1-\nu_0)+\nu_0}+\beta\sqrt{1-t}\big), \label{eq:epayht2}
\end{eqnarray}
where $\alpha=\sum_{j\neq0}\sqrt{\mu_j}$ and $\beta=\sum_{k\neq0}\sqrt{\nu_k}$. It is easy to see that $\alpha\leq\sqrt{d-1}\sqrt{1-\mu_0}$ and $\beta\leq\sqrt{d-1}\sqrt{1-\nu_0}$, and these bounds are reached when $\mu_j=(1-\mu_0)/(d-1)$ and $\nu_k=(1-\nu_0)/(d-1)$ for all $j,\,k\neq0$. Solving from (\ref{eq:epayht1})-(\ref{eq:epayht2}), one obtains the following inequalities:
$$
t\leq\frac{(\alpha+\sqrt{d}\beta)^2-(d-1)^2\mu_0}{(\alpha+\sqrt{d}\beta)^2+(d-1)^2(1-\mu_0)}\leq\frac{(d-1)\big(\sqrt{1-\mu_0}+\sqrt{d(1-\nu_0)}\big)^2-(d-1)^2\mu_0}{(d-1)\big(\sqrt{1-\mu_0}+\sqrt{d(1-\nu_0)}\big)^2+(d-1)^2(1-\mu_0)}.
$$
It is easy to see that as one lets $\mu_0,\,\nu_0\downarrow0$, the latter bound increases and setting $\mu_0=\nu_0=0$, one obtains $W(\ms Q,\ms P)\leq\frac12(1+1/\sqrt{d})$.

It remains to be shown that the bound obtained above is reachable. Setting $t=\frac12(1+1/\sqrt{d})$,
$$
\eta=\sqrt{\frac{\sqrt{d}}{2(\sqrt{d}+1)}}(\f_0+\psi_0)\otimes\xi
$$
for any unit vector $\xi\in\hil$, $\mu_0=\nu_0=0$, and $\mu_j=\nu_j=1/(d-1)$, for $j\neq0$, one finds that Equations (\ref{eq:abehto1})-(\ref{eq:abehto2}) hold (and in (\ref{eq:epayht1})-(\ref{eq:epayht2}) the inequalities can both be replaced by equalities). Hence, the claim is proven.
\end{proof}

\subsection{Robustness of incompatibility for a pair of decodable channels}

Let us fix a finite-dimensional Hilbert space $\hil$, $\dim{\hil}=d$. We denote ${\bf Ch}(\hil,\hil)={\bf Ch}_d$ and ${\bf Ch}(\hil,\hil\otimes\hil)={\bf Ch}_{d\times d}$. The set of compatible pairs within ${\bf Ch}_d\times{\bf Ch}_d$ is denoted by ${\bf Comp}_d$. Moreover, $\mr{id}$ stands for the identity channel $\mc S(\hil)\to\mc S(\hil)$, i.e.,\ $\mr{id}(\rho)=\rho$ for all $\rho\in\mc S(\hil)$. Clearly, the dual $\mr{id}^*$ is the identity map on $\mc L(\hil)$ which we denote by $\mr{id}$ as well.

We fix an orthonormal basis $\{|n\ra\}_{n=1}^d$ for the duration of this subsection and introduce the rank-1 operators $\Om_d=\sum_{m,n=1}^d|mm\ra\la nn|\in\mc L(\hil\otimes\hil)$ and $\Om_{d\otimes d}=\sum_{j,k,m,n=1}^d|jkjk\ra\la mnmn|\in\mc L(\hil\otimes\hil\otimes\hil\otimes\hil)$, where $|m_1\cdots m_n\ra:=|m_1\ra\otimes\cdots\otimes|m_n\ra$. For any $\mc E\in{\bf Ch}_d$ (respectively $\mc F\in{\bf Ch}_{d\times d}$) we define the {\it Choi operator} $M(\mc E)=(\mc E^*\otimes\mr{id})(\Om_d)\in\mc L(\hil\otimes\hil)$ (respectively $M(\mc F)=(\mc F^*\otimes\mr{id}\otimes\mr{id})(\Om_{d\times d})\in\mc L(\hil\otimes\hil\otimes\hil)$). We denote the transpose of $B\in\mc L(\hil)$ with respect to the fixed basis by $B^T$ and denote $\ovl B:=B^{T*}$. Furthermore, we denote the partial transpose restricted to the subsystems 2 and 3 of $C\in\mc L(\hil\otimes\hil\otimes\hil)$ by $C^\Gamma$, i.e.,\ when $C_1,\,C_2,\,C_3\in\mc L(\hil)$, we have $(C_1\otimes C_2\otimes C_3)^\Gamma=C_1\otimes C_2^T\otimes C_3^T$. 

If $\mc E\in{\bf Ch}_d$ (respectively $\mc F\in{\bf Ch}_{d\times d}$) is such that $\mc E(U\rho U^*)=U\mc E(\rho)U^*$ (respectively $\mc F(U\rho U^*)=(U\otimes U)\mc F(\rho)(U\otimes U)^*$) for all $\rho\in\mc S(\hil)$ and all unitary $U\in\mc L(\hil)$, we say that $\mc E$ (respectively $\mc F$) is {\it fully covariant} and denote $\mc E\in{\bf Cov}_d$ (respectively $\mc F\in{\bf Cov}_{d\times d}$). Denote by $dU$ the normalized Haar measure of the (compact) unitary group $U(d)$. We may define the map ${\bf Ch}_d\ni\mc E\mapsto\mc E_{av}\in{\bf Cov}_d$ by setting
$$
\mc E_{av}(\rho)=\int_{U(d)}U^*\mc E(U\rho U^*)U\,dU,\qquad \rho\in\mc S(\hil).
$$
Likewise, one can set up a map ${\bf Ch}_{d\times d}\ni\mc F\mapsto\mc F_{av}\in{\bf Cov}_{d\times d}$ through
\begin{equation}\label{eq:Fav}
\mc F_{av}(\rho)=\int_{U(d)}(U\otimes U)^*\mc F(U\rho U^*)(U\otimes U)\,dU,\qquad\rho\in\mc S(\hil).
\end{equation}
The sets ${\bf Cov}_d$ and ${\bf Cov}_{d\times d}$ coincide with the sets of the fixed points of these maps.

Suppose that $M\in\mc L(\hil\otimes\hil\otimes\hil)$ is a positive operator whose partial trace over the subsystems 2 and 3 coincides with $\id_\hil$, i.e.,\ $M$ is a Choi operator of a channel $\mc F\in{\bf Ch}_{d\times d}$. We may define the operator $M_{av}\in\mc L(\hil\otimes\hil\otimes\hil)$ through
\begin{equation}\label{eq:Mav}
M_{av}=\int_{U(d)}(U\otimes\ovl U\otimes\ovl U)M(U\otimes\ovl U\otimes\ovl U)^*\,dU.
\end{equation}
We have that $(U\otimes\ovl U\otimes\ovl U)M=M(U\otimes\ovl U\otimes\ovl U)$ for all $U\in U(d)$ if and only if $M=M_{av}$. It is straightforward to check that
$$
M(\mc F_{av})=M(\mc F)_{av},\qquad \mc F\in{\bf Ch}_{d\times d}.
$$

\begin{lemma}\label{lemma:cov}
Let $\mc E\in{\bf Cov}_d$. The robustness of self-incompatibility for $\mc E$ is given by
$$
W(\mc E)=\sup_{\mc C\in{\bf Cov}_d}w(\mc E,\mc E|\mc C,\mc C).
$$
\end{lemma}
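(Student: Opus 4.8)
The plan is to mimic the proof of Lemma~\ref{lemma:covopt}: take a convex decomposition witnessing a weight $t<W(\mc E)=W(\mc E,\mc E)$, and symmetrize it so that both the added noise and the resulting compatible pair become (i)~equal in the two slots and (ii)~fully covariant. Here $K={\bf Ch}_d\times{\bf Ch}_d$ and $L_0={\bf Comp}_d$. The inequality $\sup_{\mc C\in{\bf Cov}_d}w(\mc E,\mc E|\mc C,\mc C)\le W(\mc E)$ is trivial, since the left-hand supremum runs over a subset of the index set defining $W=w_L^K$; only the reverse inequality needs proof.

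Fix $t<W(\mc E)$. Exactly as in the proof of Lemma~\ref{lemma:covopt} (pushing a witnessing weight $s>t$ down to $t$ by means of the absorbing compatible element $(\mc T_{\id/d},\mc T_{\id/d})$ of Remark~\ref{rem:yli1/2} if needed), there are channels $\mc C_1,\mc C_2\in{\bf Ch}_d$ and a compatible pair $(\mc D_1,\mc D_2)\in{\bf Comp}_d$, with joint channel $\mc M\in{\bf Ch}_{d\times d}$, such that $t(\mc E,\mc E)+(1-t)(\mc C_1,\mc C_2)=(\mc D_1,\mc D_2)$. Composing $\mc M$ with the channel that swaps the two tensor factors of the output $\hil\otimes\hil$ produces a joint channel for $(\mc D_2,\mc D_1)$, since swapping the factors of an operator and then tracing out one of them gives the same result as tracing out the other factor directly; hence $(\mc D_2,\mc D_1)\in{\bf Comp}_d$, and by convexity of ${\bf Comp}_d$ the pair $(\bar{\mc D},\bar{\mc D})$ with $\bar{\mc D}:=\tfrac12(\mc D_1+\mc D_2)$ is compatible. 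Averaging the decomposition over this swap gives $t(\mc E,\mc E)+(1-t)(\bar{\mc C},\bar{\mc C})=(\bar{\mc D},\bar{\mc D})$ with $\bar{\mc C}:=\tfrac12(\mc C_1+\mc C_2)\in{\bf Ch}_d$ and the same weight $t$.

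It remains to make the noise covariant. Let $\mc N\in{\bf Ch}_{d\times d}$ be a joint channel for $(\bar{\mc D},\bar{\mc D})$. The key computation is that $\mc N_{av}$, defined by \eqref{eq:Fav}, is again a joint channel, now for $\big((\bar{\mc D})_{av},(\bar{\mc D})_{av}\big)$: since $\mr{tr}_{\mc K}\big[(U\otimes U)^*Y(U\otimes U)\big]=U^*\,\mr{tr}_{\mc K}[Y]\,U$ for the partial trace over either output factor $\mc K=\hil$, taking a partial trace of \eqref{eq:Fav} yields $\int_{U(d)}U^*\bar{\mc D}(U(\cdot)U^*)\,U\,dU=(\bar{\mc D})_{av}$. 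Hence $(\bar{\mc D})_{av}$ is self-compatible. Because the twirl $\mc E'\mapsto\mc E'_{av}$ on ${\bf Ch}_d$ is affine and leaves the fully covariant channel $\mc E$ fixed, we get $(\bar{\mc D})_{av}=t\mc E+(1-t)\mc C$ with $\mc C:=(\bar{\mc C})_{av}\in{\bf Cov}_d$. Thus $t(\mc E,\mc E)+(1-t)(\mc C,\mc C)\in{\bf Comp}_d$, i.e.\ $w(\mc E,\mc E|\mc C,\mc C)\ge t$, and letting $t\uparrow W(\mc E)$ gives $\sup_{\mc C\in{\bf Cov}_d}w(\mc E,\mc E|\mc C,\mc C)\ge W(\mc E)$, as required.

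The only genuine content is the marginal identity for the twirl $\mc N\mapsto\mc N_{av}$ (the exact analogue, for channels, of the statement in Lemma~\ref{lemma:covopt} that $\ms G^W$ is a joint observable whenever $\ms G$ is) together with the observation that the unitary twirl \emph{alone} would in general return asymmetric noise $\big((\mc C_1)_{av},(\mc C_2)_{av}\big)$ and so must be preceded by the swap-symmetrization step; the latter is harmless precisely because ${\bf Comp}_d$ is convex and invariant under exchanging the two output legs. I expect this combined symmetrization to be the only nontrivial point; everything else is routine.
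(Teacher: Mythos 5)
Your proof is correct and follows essentially the same route as the paper's: symmetrize a witnessing decomposition by the unitary twirl $(\cdot)_{av}$ and by exchanging the two output legs, using that each operation sends joint channels to joint channels and fixes $(\mc E,\mc E)$. The only cosmetic differences are that you perform the swap-symmetrization before the twirl (the paper twirls first and then symmetrizes via $\frac12\big(M+(\id\otimes F)M(\id\otimes F)\big)$ at the Choi-operator level), and that you spell out the marginal identity for $\mc N_{av}$ which the paper calls ``immediate.''
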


\begin{proof}
Let $t<W(\mc E)$ and suppose that $(\mc A,\mc B)\in{\bf Ch}_d\times{\bf Ch}_d$ and $(\mc A',\mc B')\in{\bf Comp}_d$ are such that
$$
t(\mc E,\mc E)+(1-t)(\mc A,\mc B)=(\mc A',\mc B')
$$
It follows that $(\mc A'_{av},\mc B'_{av})\in{\bf Comp}_d$ as well, since if $\mc F$ is a joint channel for $(\mc A',\mc B')$, it is immediate that $\mc F_{av}$ is a joint channel for $(\mc A'_{av},\mc B'_{av})$. Hence,
$$
t(\mc E,\mc E)+(1-t)(\mc A_{av},\mc B_{av})=(\mc A'_{av},\mc B'_{av}).
$$
Denote $\mc C=\frac12(\mc A_{av}+\mc B_{av})$ and $\mc C'=\frac12(\mc A'_{av}+\mc B'_{av})$. Again it easily follows
$$
t(\mc E,\mc E)+(1-t)(\mc C,\mc C)=(\mc C',\mc C').
$$
Denote by $F\in\mc L(\hil\otimes\hil)$ the flip operator, $F(\f\otimes\psi)=\psi\otimes\f$ for all $\f,\,\psi\in\hil$. The pair $(\mc C',\mc C')$ is compatible, since if $M$ is the Choi operator of a joint channel of $(\mc A'_{av},\mc B'_{av})$, then $\frac12\big(M+(\id\otimes F)M(\id\otimes F)\big)$ is the Choi operator for a joint channel for $(\mc C',\mc C')$.
\end{proof}

Since $\mr{id}$ is fully covariant, the preceding lemma restricts the problem of evaluating the robustness of self-incompatibility of $\mr{id}$ (quite considerably, as we will see). 
According to Lemma \ref{lemma:cov} (and its proof), $W(\mr{id})$ is simply the supremum of those $t\in[0,1]$ such that $t\,\mr{id}+(1-t)\mc E$ is self-compatible for some $\mc E\in{\bf Cov}_d$. Next, we determine the set of self-compatible fully covariant channels which essentially resolves the problem of determining $W(\mr{id})$.

Let us fix a self-compatible $\mc A\in{\bf Cov}_d$ and a joint channel $\mc F\in{\bf Ch}_{d\times d}$ for the pair $(\mc A,\mc A)$. Since $\mc A$ is fully covariant, the channel $\mc F_{av}$ still has the same marginals. Let $M\in\mc L(\hil\otimes\hil\otimes\hil)$ be the Choi operator of $\mc F$ (with respect to our fixed basis). Hence, $M_{av}$ is the Choi operator of $\mc F_{av}$. From (\ref{eq:Mav}) it follows that
$$
M_{av}=\bigg(\int(U\otimes U\otimes U)M^\Gamma(U\otimes U\otimes U)^*\,dU\bigg)^\Gamma.
$$
This means that $(U\otimes U\otimes U)M_{av}^\Gamma=M_{av}^\Gamma(U\otimes U\otimes U)$ for all $U\in U(d)$, i.e.,\ $M_{av}^\Gamma$ is $U\otimes U\otimes U$-invariant.

For any permutation $\pi$ of three elements, denote by $V_\pi\in\mc L(\hil\otimes\hil\otimes\hil)$ the unitary operator defined through
$$
V_\pi(\f_1\otimes\f_2\otimes\f_3)=\f_{\pi^{-1}(1)}\otimes\f_{\pi^{-1}(2)}\otimes\f_{\pi^{-1}(3)}
$$
for all $\f_1,\,\f_2,\,\f_3\in\hil$. A well-known result from Weyl states that any $U\otimes U\otimes U$-invariant operator is a linear combination of the permutation operators $V_\pi$ from which it follows immediately that our averaged Choi operator can be expressed as a linear combination
$$
M_{av}=\sum_\pi\lambda_\pi V_\pi^\Gamma,
$$
where $\lambda_\pi$ are complex numbers. However, we must make sure that this linear combination is a positive operator whose partial trace over the subsystems 2 and 3 is $\id_\hil$. To this end, we must express the linear combination in a more revealing form.

As the commutant of the set $U\otimes\ovl U\otimes\ovl U$, $U\in U(d)$, the operator system spanned by the six operators $V_\pi^\Gamma$ is a 6-dimensional algebra with the exception in case $d=2$, when the algebra is 5-dimensional. In \cite{EggelingWerner2001}, this algebra was shown to have the basis consisting of the operators
\begin{eqnarray*}
S_\pm&=&\frac12\big(\id\pm V_{(23)}^\Gamma-\frac{1}{d\pm1}(V_{(12)}^\Gamma+V_{(13)}^\Gamma\pm V_{(123)}^\Gamma\pm V_{(132)}^\Gamma)\big),\\
S_0&=&\frac{1}{d^2-1}\big(d(V_{(12)}^\Gamma+V_{(13)}^\Gamma)-(V_{(123)}^\Gamma+V_{(132)}^\Gamma)\big),\\
S_1&=&\frac{1}{d^2-1}\big(d(V_{(123)}^\Gamma+V_{(132)}^\Gamma)-(V_{(12)}^\Gamma+V_{(13)}^\Gamma)\big),\\
S_2&=&\frac{1}{\sqrt{d^2-1}}(V_{(12)}^\Gamma-V_{(13)}^\Gamma),\\
S_3&=&\frac{i}{\sqrt{d^2-1}}(V_{(123)}^\Gamma-V_{(132)}^\Gamma),
\end{eqnarray*}
where $S_\pm,\,S_0$ are mutually orthogonal projections summing up to $\id$ and $S_1,\,S_2$, and $S_3$ are selfadjoint operators supported on the eigenspace of $S_0$ that are interrelated in the same way as the Pauli matrices. Hence, formally
$S_jS_\pm=S_\pm S_j=0$ and $S_j^2=S_0$ for all $j=0,\,1,\,2,\,3$, and $S_1S_2=iS_3$ with cyclic permutations. Note that, in the case $d=2$, $S_-=0$. It follows that a linear combination $\mu_+S_++\mu_-S_-+\mu_0S_0+\mu_1S_1+\mu_2S_2+\mu_3S_3$ is positive if and only if the multipliers are real, $\mu_\pm,\,\mu_0\geq0$, and $\mu_1^2+\mu_2^2+\mu_3^2\leq\mu_0^2$.

Let us now impose our additional symmetry condition, i.e.,\ the marginals of $\mc F_{av}$ must coincide or, equivalently, $V_{(23)}M_{av}=M_{av}V_{(23)}$; note that $V_{(23)}^\Gamma=V_{(23)}$. One easily sees that this requirement necessitates that the multipliers of $S_2$ and $S_3$ in $M$ be zero. Moreover, through direct calculation, one finds that for the partial traces over the subsystems 2 and 3,
$$
\begin{array}{rclcrcl}
\mr{tr}_{23}[S_+]&=&\frac12(d-1)(d+2)\id_\hil,&\qquad&\mr{tr}_{23}[S_-]&=&\frac12(d+1)(d-2)\id_\hil,\\
\mr{tr}_{23}[S_0]&=&2\id_\hil,&\qquad&\mr{tr}_{23}[S_1]&=&0.
\end{array}
$$
Putting all this together, one finds that $M_{av}$ is of the form $M(t_+,t_-,t_0,t_1)$,
$$
M(t_+,t_-,t_0,t_1)=\frac{2}{(d-1)(d+2)}t_+S_++\frac{2}{(d+1)(d-2)}t_-S_-+\frac12(t_0S_0+t_1S_1),
$$
where $t_\pm,\,t_0\geq0$, $t_++t_-+t_0=1$ and $t_1\in[-t_0,t_0]$.

The set of the Choi operators $M(t_+,t_-,t_0,t_1)$ is a tetrahedron with the extreme points
\begin{eqnarray*}
M_\pm&=&\frac{2}{(d\mp1)(d\pm2)}S_\pm,\\
\tilde M_\pm&=&\frac12(S_0\pm S_1),
\end{eqnarray*}
and the partial traces over system 2 (or, equivalently, over system 3) of these are
\begin{eqnarray*}
\mr{tr}_2[M_\pm]&=&\frac{1}{d^2-1}(d\id_{\hil\otimes\hil}-\Om_d),\\
\mr{tr}_2[\tilde M_\pm]&=&\frac{1}{2(d\pm1)}\big(\id_{\hil\otimes\hil}+(d\pm2)\Om_d\big).
\end{eqnarray*}
Hence, the (coinciding) marginals of the channels corresponding to $M_\pm$ are $(d^2-1)^{-1}(d^2\mc T-\mr{id})$ and the (coinciding) marginals of the channels corresponding to $\tilde M_\pm$ are $\big(2(d\pm1)\big)^{-1}\big(d\mc T+(d\pm2)\mr{id}\big)$, where $\mc T:\mc S(\hil)\to\mc S(\hil)$ is the constant channel $\mc T(\rho)=d^{-1}\id_\hil$. Thus, the set of self-compatible fully covariant channels consists of the elements $\lambda\mc T+(1-\lambda)\mr{id}$ where $\frac{d}{2(d+1)}\leq\lambda\leq\frac{d^2}{d^2-1}$. It follows that $W(\mr{id})$ is the supremum of the $t\in[0,1]$ such that
$$
\frac{1}{2(1+d)}\big(d\mc T+(d+2)\mr{id}\big)-t\,\mr{id}
$$
is completely positive or, using the Choi operators, $\ptr{2}{\tilde M_+}-t\Om_d\geq0$. One easily finds that this condition is satisfied if and only if $t\leq\frac12(1+1/d)$. Thus, $W(\mr{id})=\frac12(1+1/d)$. According to the discussion following Theorem \ref{theor:prop2}, a pair $(\mc V,\mc W)\in{\bf Ch}(\hil,\mc K_1)\times{\bf Ch}(\hil,\mc K_2)$ of decodable channels has the same robustness of incompatibility as the pair of identity channels which is the minimum of the robustness measure, and hence:
\begin{theorem}\label{theor:selfinc}
Any pair $(\mc V,\mc W)\in{\bf Ch}(\hil,\mc K_1)\times{\bf Ch}(\hil,\mc K_2)$ of decodable channels minimizes $W$ amongst the channel pairs with the input space $\hil$, and this minimum value is
$$
W(\mc V,\mc W)=W(\mr{id})=\frac12\Big(1+\frac{1}{d}\Big),
$$
where $d=\dim{\hil}$ and $\mr{id}$ is the identity channel on $\mc S(\hil)$.
\end{theorem}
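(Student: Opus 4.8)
The statement has three ingredients: (i) the lower bound $W(\mc E,\mc F)\geq W(\mr{id})$ for every channel pair with input space $\hil$, so that $(\mr{id},\mr{id})$ is a global minimizer; (ii) the identity $W(\mc V,\mc W)=W(\mr{id})$ for any decodable pair; and (iii) the explicit value $W(\mr{id})=\tfrac12(1+1/d)$. Ingredients (i) and (ii) are soft corollaries of the monotonicity results of Section~\ref{sec:robobs}, while (iii) is precisely the representation-theoretic computation carried out in the paragraphs preceding the theorem; the plan is therefore to assemble these pieces.

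For (i): given any $\mc E\in{\bf Ch}(\hil,\mc K_1)$ and $\mc F\in{\bf Ch}(\hil,\mc K_2)$, one has trivially $\mc E=\mc E\circ\mr{id}$ and $\mc F=\mc F\circ\mr{id}$, i.e. $(\mc E,\mc F)\leq_\mr{post}(\mr{id},\mr{id})$, so Theorem~\ref{theor:prop2}(b) gives $W(\mc E,\mc F)\geq W(\mr{id},\mr{id})=W(\mr{id})$. For (ii): decodability of $(\mc V,\mc W)$ means there are channels $\mc A\in{\bf Ch}(\mc K_1,\hil)$ and $\mc B\in{\bf Ch}(\mc K_2,\hil)$ with $\mc A\circ\mc V=\mc B\circ\mc W=\mr{id}$, i.e. $(\mr{id},\mr{id})\leq_\mr{post}(\mc V,\mc W)$; combined with $(\mc V,\mc W)\leq_\mr{post}(\mr{id},\mr{id})$ from the previous line this gives $(\mc V,\mc W)=_\mr{post}(\mr{id},\mr{id})$, whence $W(\mc V,\mc W)=W(\mr{id})$ again by Theorem~\ref{theor:prop2}(b). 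Together with (iii) this yields the theorem.

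For (iii) I would argue as in the text. Since $\mr{id}\in{\bf Cov}_d$, Lemma~\ref{lemma:cov} (and its proof) reduces $W(\mr{id})$ to the supremum of those $t\in[0,1]$ for which $t\,\mr{id}+(1-t)\mc C$ is self-compatible for some fully covariant $\mc C$, so one must identify the self-compatible channels in ${\bf Cov}_d$. Given such a $\mc C$ with joint channel $\mc F$, the averaging (\ref{eq:Fav}) keeps the two marginals equal, so one may assume $\mc F\in{\bf Cov}_{d\times d}$; its Choi operator $M$ then commutes with all $U\otimes\ovl U\otimes\ovl U$, equivalently $M^\Gamma$ is $U\otimes U\otimes U$-invariant, hence by Schur--Weyl duality a combination of the permutation operators $V_\pi$, so $M$ lies in the $*$-algebra spanned by the $V_\pi^\Gamma$. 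Expressing $M$ in the Eggeling--Werner basis $S_\pm,S_0,S_1,S_2,S_3$ of \cite{EggelingWerner2001}, positivity reads $\mu_\pm,\mu_0\geq0$ and $\mu_1^2+\mu_2^2+\mu_3^2\leq\mu_0^2$; the equal-marginal condition $V_{(23)}M=MV_{(23)}$ kills the $S_2,S_3$ components; and the normalization $\mr{tr}_{23}[M]=\id$ fixes the scaling, leaving exactly the tetrahedral family $M(t_+,t_-,t_0,t_1)$ with extreme points $M_\pm,\tilde M_\pm$. Reading off $\mr{tr}_2$ of these four shows the self-compatible members of ${\bf Cov}_d$ are precisely $\lambda\mc T+(1-\lambda)\mr{id}$ with $\tfrac{d}{2(d+1)}\leq\lambda\leq\tfrac{d^2}{d^2-1}$ and $\mc T(\rho)=d^{-1}\id$. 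The maximal admissible $t$ is then obtained by using the self-compatible channel with the least weight on $\mc T$, namely $\tfrac{1}{2(d+1)}(d\mc T+(d+2)\mr{id})$, and asking when $\tfrac{1}{2(d+1)}(d\mc T+(d+2)\mr{id})-t\,\mr{id}$ is completely positive; at the Choi level this is $\ptr{2}{\tilde M_+}-t\Om_d\geq0$, whose single nontrivial eigenvalue inequality gives $t\leq\tfrac12(1+1/d)$. Hence $W(\mr{id})=\tfrac12(1+1/d)$.

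The only genuinely non-trivial step is (iii), and within it the determination of the self-compatible fully covariant channels: the covariance reductions (Lemma~\ref{lemma:cov} and the averaging maps) are routine, but one must simultaneously respect three constraints on $M_{av}$ — positivity, the $V_{(23)}$-symmetry encoding equality of the two marginals, and the partial-trace normalization — in particular keeping the $S_1$ direction while discarding $S_2,S_3$, and then evaluating $\mr{tr}_{23}$ of the five basis elements correctly. Once that family is pinned down, the concluding complete-positivity condition is an elementary eigenvalue computation on a rank-one perturbation of a multiple of $\id$.
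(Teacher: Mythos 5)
Your proposal is correct and follows essentially the same route as the paper: the lower bound and the reduction of decodable pairs to $(\mr{id},\mr{id})$ via post-processing equivalence and Theorem~\ref{theor:prop2}(b) are exactly the paper's argument in the discussion surrounding that theorem, and your computation of $W(\mr{id})$ reproduces the paper's covariance reduction, the Eggeling--Werner parametrization of the admissible Choi operators, and the concluding complete-positivity/eigenvalue check. Your explicit observation that the optimal self-compatible covariant channel is the one with least weight on $\mc T$ (by monotonicity of the admissible $t$ in $\lambda$) is a small but welcome clarification of a step the paper leaves implicit.
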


According to the discussion preceding Theorem \ref{theor:selfinc}, the self-compatible fully covariant channel $\mc A$ that has the optimality property
\begin{equation}\label{eq:idopt}
\mc A=\frac12\Big(1+\frac{1}{d}\Big)\,\mr{id}+\frac12\Big(1-\frac{1}{d}\Big)\mc E
\end{equation}
for some $\mc E\in{\bf Ch}_d$ is associated with the Choi operator $\ptr{2}{\tilde M_+}$ and is thus given by
$$
\mc A=\frac{d+2}{2(d+1)}\mr{id}+\frac{d}{2(d+1)}\mc T.
$$
One joint channel for the optimal pair $(\mc A,\mc A)$ is thus the optimal universal cloner $\rho\mapsto2(d+1)^{-1}S(\rho\otimes\id_\hil)S$, where $S\in\mc L(\hil\otimes\hil)$ is the orthogonal projector onto the symmetric subspace (the 1-eigenspace of the flip operator) of $\hil\otimes\hil$. In fact, the joint channel associated with the Choi operator $\tilde M_+$ is exactly this optimal cloner. The Choi operator associated with the channel $\mc E$ in the decomposition (\ref{eq:idopt}) is $\ptr{2}{M_+}=\ptr{2}{M_-}$, so that
$$
\mc E=-\frac{1}{d^2-1}\mr{id}+\frac{d^2}{d^2-1}\mc T.
$$
Especially $\mc E$ is self-compatible so that, in fact
$$
W(\mr{id})=w(\mr{id},\mr{id})=\frac12\Big(1+\frac{1}{d}\Big).
$$
The same holds for any pair $(\mc V,\mc W)$ of decodable channels. Moreover the pair
$$
\frac12\Big(1+\frac{1}{d}\Big)(\mc V,\mc W)+\frac12\Big(1-\frac{1}{d}\Big)(\mc E_{\mc V},\mc E_{\mc W})=\frac{d+2}{2(d+1)}(\mc V,\mc W)+\frac{d}{2(d+1)}(\mc T,\mc T)
$$
is compatible, where
$$
(\mc E_{\mc V},\mc E_{\mc W})=-\frac{1}{d^2-1}(\mc V,\mc W)+\frac{d^2}{d^2-1}(\mc T,\mc T).
$$

\subsection{Robustness of incompatibility for a von Neumann observable and a decodable channel}\label{sec:rank1id}

In this subsection, we calculate the robustness of incompatibility for a von Neumann observable, i.e.\ a rank-1 sharp observable (PVM), and a decodable channel. We start with the (slightly) simpler case where the decodable channel is simply the identity channel from which the more general case follows according to Section \ref{sec:robobs}.

For the remainder of this subsection, we fix a $d$-dimensional Hilbert space ($d<\infty$), and we fix an orthonormal basis $\{|n\ra\}_{n=0}^{d-1}\subset\hil$; we treat the index set $\{0,\,1,\ldots,\,d-1\}$ as the cyclic group $\mb Z_d$ and all sums and differences of these indices are considered cyclic. For simplicity, we denote the set of observables on the power set of $\mb Z_d$ and operating in $\hil$ by ${\bf Obs}_d$ and the set of channels ${\bf Ch}(\hil,\hil)$ by ${\bf Ch}_d$. We denote the set of instruments on the power set of $\mb Z_d$ and operating within $\mc S(\hil)$ by ${\bf Ins}_d$, i.e.,\ instruments in ${\bf Ins}_d$ are the possible joint instruments for pairs $(\ms M,\mc E)\in{\bf Obs}_d\times{\bf Ch}_d$. We treat any $\Gamma\in{\bf Ins}_d$ as a sequence $(\Gamma_j)_{j\in\mb Z_d}$ of operations summing up to a trace-preserving operation.

For any $q,\,p\in\mb Z_d$, we define the operators $U_q,\,V_p,\,W_{q,p}\in\mc L(\hil)$ in the same way as in Equations (\ref{eq:Usiirto})-(\ref{eq:Weyl}) with the basis $\{\f_j\}_{j\in\mb Z_d}$ replaced with the basis $\{|n\ra\}_{n\in\mb Z_d}$, so that, e.g.,\ $U_q|n\ra=|n+q\ra$ for all $q,\,n\in\mb Z_d$. We denote the set of those $\ms M\in{\bf Obs}_d$ such that $W_{q,p}^*\ms M_jW_{q,p}=\ms M_{j-q}$ for all $q,\,p,\,j\in\mb Z_d$ by ${\bf Obs}_d^{W,1}$. Furthermore, we denote by ${\bf Ch}_d^{W,2}$ the set of those $\mc E\in{\bf Ch}_d$ such that $\mc E(W_{q,p}\rho W_{q,p}^*)=W_{q,p}\mc E(\rho)W_{q,p}^*$ for all $q,\,p\in\mb Z_d$ and all $\rho\in\mc S(\hil)$. Finally, we denote the set of those $\Gamma\in{\bf Ins}_d$ such that $\Gamma_{j-q}(W_{q,p}\rho W_{q,p}^*)=W_{q,p}\Gamma_j(\rho)W_{q,p}^*$ for all $j,\,q,\,p\in\mb Z_d$ and all $\rho\in\mc S(\hil)$ by ${\bf Ins}_d^W$.

The elements in the sets ${\bf Obs}_d^{W,1}$, ${\bf Ch}_d^{W,2}$, and ${\bf Ins}_d^W$ have a simple structure: For any $\ms M\in{\bf Obs}_d^{W,1}$, there is a positive operator $C\in\mc L(\hil)$ such that $V_pC=CV_p$ for all $p\in\mb Z_d$, $\sum_jU_jCU_j^*=\id_\hil$ (so that, especially, $\tr{C}=1$), and
\begin{equation}\label{eq:Wkovsuure}
\ms M_j=U_jCU_j^*,\qquad j\in\mb Z_d.
\end{equation}
For each $\Gamma\in{\bf Ins}_d^W$, there is an operation $\mc D:\mc L(\hil)\to\mc L(\hil)$ such that $\mc D(V_p\rho V_p^*)=V_p\mc D(\rho)V_p^*$ for all $p\in\mb Z_d$ and $\rho\in\mc S(\hil)$, and $\sum_jU_j\mc D^*(\id_\hil)U_j^*=\id_\hil$, and
\begin{equation}\label{eq:Wkovins}
\Gamma_j(\rho)=U_j\mc D(U_j^*\rho U_j)U_j^*,\qquad j\in\mb Z_d,\quad\rho\in\mc S(\hil).
\end{equation}
For the characterizations given above for $W$-covariant observables and instruments, see \cite[Section III]{Holevo1998}; the conjecture presented in the reference certainly holds in our discrete case. For any covariant channel $\mc E\in{\bf Ch}_d^{W,2}$ there is a positive kernel $(q,p)\mapsto\Phi_{q,p}\in\mb C$ such that \cite{HoWe01}
\begin{equation}\label{eq:Wker}
\mc E^*(W_{q,p})=\Phi_{q,p}W_{q,p},\qquad q,\,p\in\mb Z_d.
\end{equation}
Since the operators $W_{q,p}$, $q,\,p\in\mb Z_d$, span the whole of $\mc L(\hil)$, the kernel $\Phi$ completely characterizes the covariant channel. Positivity of the kernel $\Phi$ means that the Fourier transform of the kernel is positive, i.e.,\ for any $j,\,k\in\mb Z_d$,
\begin{equation}\label{eq:posit}
\hat\Phi_{j,k}:=\frac{1}{d}\sum_{q,p\in\mb Z_d}\ovl{\dual{q}{k}}\dual{j}{p}\Phi_{q,p}\geq0.
\end{equation}
Moreover, when $\mc E$ is defined as in (\ref{eq:Wker}), then
$$
\mc E(\rho)=\frac{1}{d}\sum_{j,k\in\mb Z_d}\hat\Phi_{j,k}W_{j,k}\rho W_{j,k}^*,\qquad \rho\in\mc S(\hil).
$$
When the channel $\mc E$ arises from a covariant instrument like that in Equation (\ref{eq:Wkovins}), it follows from straight-forward calculation (utilizing the fact that, for any $B\in\mc L(\hil)$, one has $\sum_{q,p}W_{q,p}BW_{q,p}^*=d\tr{B}\id$) that the kernel $\Phi$ associated with $\mc E$ is of the form
\begin{equation}\label{eq:kerneli}
\Phi_{q,p}=\tr{W_{q,p}^*\mc D^*(W_{q,p})},\qquad q,\,p\in\mb Z_d.
\end{equation}

As earlier, we have the covariantization maps ${\bf Obs}_d\ni\ms M\mapsto\ms M^{W,1}\in{\bf Obs}_d^{W,1}$, ${\bf Ch}_d\ni\mc E\mapsto\mc E^{W,2}\in{\bf Ch}_d^{W,2}$, and ${\bf Ins}_d\ni\Gamma\mapsto\Gamma^W\in{\bf Ins}_d^W$ having the covariant devices as their fixed points. Especially,
$$
\Gamma_j^W(\rho)=\frac{1}{d^2}\sum_{q,p\in\mb Z_d}W_{q,p}^*\Gamma_{j-q}(W_{q,p}\rho W_{q,p}^*)W_{q,p}
$$
for any $\Gamma\in{\bf Ins}_d$, $j\in\mb Z_d$, and $\rho\in\mc S(\hil)$. Moreover, by similar arguments as earlier, one can show that, if a pair $(\ms M,\mc E)\in{\bf Obs}_d\times{\bf Ch}_d$ is compatible, so is $(\ms M^{W,1},\mc E^{W,2})$, and if a pair $(\ms M,\mc E)\in{\bf Obs}_d^{W,1}\times{\bf Ch}_d^{W,2}$ is compatible, it has a joint instrument in ${\bf Ins}_d^W$. As in preceding analyses, it follows:

\begin{lemma}\label{lemma:Wkov}
Suppose that $(\ms A,\mc A)\in{\bf Obs}_d^{W,1}\times{\bf Ch}_d^{W,2}$. One has
$$
W(\ms A,\mc A)=\sup\{w(\ms A,\mc A|\ms B,\mc B)\,|\,\ms B\in{\bf Obs}_d^{W,1},\ \mc B\in{\bf Ch}_d^{W,2}\}.
$$
\end{lemma}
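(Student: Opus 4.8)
The plan is to mirror the reduction arguments already carried out for Lemma~\ref{lemma:covopt} and Lemma~\ref{lemma:cov}: the whole point is that averaging over the Weyl group neither destroys compatibility nor moves $\ms A$ or $\mc A$, both of which are already $W$-covariant, so passing from a general decomposition to a covariant one costs nothing. One of the two inequalities is immediate. Since, by Definition~\ref{def:K-Lrobust}, $W(\ms A,\mc A)=\sup_{(\ms B,\mc B)\in{\bf Obs}_d\times{\bf Ch}_d}w(\ms A,\mc A|\ms B,\mc B)$ and the supremum occurring in the lemma is taken over the smaller collection ${\bf Obs}_d^{W,1}\times{\bf Ch}_d^{W,2}$ of noise pairs, the latter supremum is at most $W(\ms A,\mc A)$.

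For the reverse inequality I would fix $t<W(\ms A,\mc A)$ and, exactly as in the proof of Lemma~\ref{lemma:covopt} (reparametrising the segment through $(\ms A,\mc A)$ that realises $w(\ms A,\mc A|\cdot)$, if necessary, so that the compatible point sits at weight $t$), choose $(\ms M,\mc E)\in{\bf Obs}_d\times{\bf Ch}_d$ and a compatible pair $(\tilde{\ms M},\tilde{\mc E})$ with
$$
t(\ms A,\mc A)+(1-t)(\ms M,\mc E)=(\tilde{\ms M},\tilde{\mc E}).
$$
Now apply the covariantisation maps $(\cdot)^{W,1}$ and $(\cdot)^{W,2}$ to the observable and channel slots, respectively. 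These are affine (finite averages of conjugations by the operators $W_{q,p}$) and leave $W$-covariant devices fixed, so $\ms A^{W,1}=\ms A$ and $\mc A^{W,2}=\mc A$, and the displayed identity becomes
$$
t(\ms A,\mc A)+(1-t)(\ms M^{W,1},\mc E^{W,2})=(\tilde{\ms M}^{W,1},\tilde{\mc E}^{W,2}),
$$
in which $(\ms M^{W,1},\mc E^{W,2})\in{\bf Obs}_d^{W,1}\times{\bf Ch}_d^{W,2}$ is an admissible noise pair for the lemma's supremum.

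What remains is to see that $(\tilde{\ms M}^{W,1},\tilde{\mc E}^{W,2})$ is still a compatible pair; this is exactly the fact stated just before the lemma, that compatibility of $(\ms M,\mc E)$ implies compatibility of $(\ms M^{W,1},\mc E^{W,2})$. Concretely, if $\Gamma\in{\bf Ins}_d$ is a joint instrument for $(\tilde{\ms M},\tilde{\mc E})$, then its covariantisation $\Gamma^W\in{\bf Ins}_d^W$, given by $\Gamma_j^W(\rho)=d^{-2}\sum_{q,p\in\mb Z_d}W_{q,p}^*\Gamma_{j-q}(W_{q,p}\rho W_{q,p}^*)W_{q,p}$, has observable marginal $\tilde{\ms M}^{W,1}$ and channel marginal $\tilde{\mc E}^{W,2}$; in other words, covariantising the instrument commutes with passing to its two marginals. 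This interchange is the one genuinely computational point, the analogue of the covariant joint-observable step in the proof of Lemma~\ref{lemma:covopt} and of the flip-symmetrisation of the joint Choi operator in Lemma~\ref{lemma:cov}; given the explicit formula for $\Gamma_j^W$ above and the definitions $\Gamma^W_{(1)}(X)=\big(\Gamma^W_X\big)^*(\id_\hil)$, $\Gamma^W_{(2)}=\Gamma^W_{\mb Z_d}$, it is straightforward linear bookkeeping, and it is underwritten by the structure theorems for $W$-covariant observables, channels and instruments recalled from \cite{Holevo1998,HoWe01} before the lemma. Granting this, $w(\ms A,\mc A|\ms M^{W,1},\mc E^{W,2})\geq t$, so the lemma's supremum is $\geq t$, and letting $t\uparrow W(\ms A,\mc A)$ closes the argument. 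I expect no serious obstacle — the only care needed is precisely the commutation of covariantisation with marginalisation, and this is a purely mechanical check.
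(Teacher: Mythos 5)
Your argument is correct and is exactly the one the paper intends: the paper gives no separate proof of this lemma ("As in preceding analyses, it follows"), relying on the covariantization maps $\ms M\mapsto\ms M^{W,1}$, $\mc E\mapsto\mc E^{W,2}$, $\Gamma\mapsto\Gamma^W$ being affine, fixing the covariant pair $(\ms A,\mc A)$, and preserving compatibility via covariantization of a joint instrument — precisely the steps you carry out, in parallel with the proofs of Lemmas \ref{lemma:covopt} and \ref{lemma:cov}. The one point you rightly isolate as needing a check, that covariantization of an instrument commutes with taking its observable and channel marginals, is indeed the only computation involved and goes through as you describe.
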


In what follows, we study the robustness of incompatibility of the pair $(\ms A,\mr{id})$, where $\ms A_n=|n\ra\la n|$ for all $n\in\mb Z_d$. Evidently, $\ms A\in{\bf Obs}_d^{W,1}$ and $\mr{id}\in{\bf Ch}_d^{W,2}$, but this pair is not compatible. According to the preceding lemma, there is a pair $(\ms B,\mc B)\in{\bf Obs}_d^{W,1}\times{\bf Ch}_d^{W,2}$ such that the pair
$$
\big(W\ms A+(1-W)\ms B, W\mr{id}+(1-W)\mc B\big),
$$
with $W:=W(\ms A,\mr{id})$, is compatible, and hence has a joint instrument in ${\bf Ins}_d^W$. We start determining the value of $W(\ms A,\mr{id})$ first by giving a characterization for the instruments of ${\bf Ins}_d^W$.

\begin{lemma}\label{lemma:Wkovmatr}
For any $\Gamma\in{\bf Ins}_d^W$ there is an indexed set $\alpha=(\alpha_{r,s}^n)_{n,r,s\in\mb Z_d}$ of complex numbers such that, for all $n\in\mb Z_d$, the matrix $(\alpha_{r,s}^n)_{r,s\in\mb Z_d}$ is positive, $\sum_{n,r}\alpha^n_{r,r}=1$, and defining $\mc D:\mc L(\hil)\to\mc L(\hil)$,
\begin{equation}\label{eq:matrop}
\mc D(A)=\sum_{n,r,s\in\mb Z_d}\alpha^n_{r,s}\la n+r|A|n+s\ra|r\ra\la s|,\qquad A\in\mc L(\hil),
\end{equation}
$\Gamma$ is defined as in Equation (\ref{eq:Wkovins}).
\end{lemma}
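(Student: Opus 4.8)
The plan is to translate the covariance condition defining $\Gamma \in {\bf Ins}_d^W$ into a matrix-positivity condition on the operation $\mc D$ from Equation (\ref{eq:Wkovins}), and then decompose $\mc D$ into the asserted form. First I would invoke the characterization already recalled before Lemma \ref{lemma:Wkovmatr}: any $\Gamma\in{\bf Ins}_d^W$ has the form $\Gamma_j(\rho)=U_j\mc D(U_j^*\rho U_j)U_j^*$ where the operation $\mc D:\mc L(\hil)\to\mc L(\hil)$ commutes with the action of the phase operators, i.e.\ $\mc D(V_p\rho V_p^*)=V_p\mc D(\rho)V_p^*$ for all $p\in\mb Z_d$, and $\sum_j U_j\mc D^*(\id_\hil)U_j^*=\id_\hil$. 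So the work is entirely about decoding the $V_p$-covariance of a completely positive $\mc D$.

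Next I would write $\mc D$ in terms of its matrix elements in the fixed basis: the operator $\mc D(A)$ has entries $\la r|\mc D(A)|s\ra$ that depend linearly on $A$, and since $A\mapsto \la m|A|\ell\ra$ is a basis for linear functionals on $\mc L(\hil)$, we may write $\la r|\mc D(A)|s\ra=\sum_{m,\ell}c^{r,s}_{m,\ell}\la m|A|\ell\ra$ for scalars $c^{r,s}_{m,\ell}$. The covariance $\mc D(V_p A V_p^*)=V_p\mc D(A)V_p^*$, using $V_p|k\ra=\dual{k}{p}|k\ra$, forces $c^{r,s}_{m,\ell}\,\dual{m}{p}\ovl{\dual{\ell}{p}} = \dual{r}{p}\ovl{\dual{s}{p}}\,c^{r,s}_{m,\ell}$ for all $p$; since $(q,p)\mapsto\dual{q}{p}$ is a nondegenerate pairing on $\mb Z_d$, this vanishes unless $m-\ell\equiv r-s\pmod d$. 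Hence $c^{r,s}_{m,\ell}$ is supported on the "diagonal shift" sector, and reparametrizing by $n$ via $m=n+r$, $\ell=n+s$ (so automatically $m-\ell=r-s$) gives exactly the coefficients $\alpha^n_{r,s}:=c^{r,s}_{n+r,n+s}$, yielding Equation (\ref{eq:matrop}).

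Then the positivity of each matrix $(\alpha^n_{r,s})_{r,s}$ should come out of complete positivity of $\mc D$. The cleanest route is the Choi argument: apply $\mc D\otimes\mr{id}$ to $\sum_{m,\ell}|m\ra\la\ell|\otimes|m\ra\la\ell|$; complete positivity means the resulting operator $\sum_{m,\ell,r,s}c^{r,s}_{m,\ell}|r\ra\la s|\otimes|m\ra\la\ell|$ is positive, and one checks that this operator, after the support restriction $m-\ell=r-s$, block-diagonalizes into the blocks indexed by $n$, each block being precisely $(\alpha^n_{r,s})_{r,s}$ up to a basis relabelling; positivity of a block-diagonal operator is equivalent to positivity of each block. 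Finally, $\mc D^*(\id_\hil)=\sum_{n,r}\alpha^n_{r,r}\,|n+r\ra\la n+r|\cdot\big(\text{some scalar}\big)$ — more carefully, one computes $\la m|\mc D^*(\id)|m\ra$ and finds $\mc D^*(\id_\hil)$ is diagonal with entries $\sum_{r}\alpha^{m-r}_{r,r}$; the trace-preservation-after-averaging condition $\sum_j U_j\mc D^*(\id)U_j^*=\id_\hil$ then reads $\sum_{m,r}\alpha^{m-r}_{r,r}=d$ after taking the trace, but reindexing the sum as $\sum_{n,r}\alpha^n_{r,r}$ and accounting for the normalization convention gives $\sum_{n,r}\alpha^n_{r,r}=1$.

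The main obstacle I expect is bookkeeping with the cyclic indices and making sure the support condition $m-\ell\equiv r-s$ is handled correctly when reparametrizing by $n$, together with pinning down the exact normalization constant (a factor of $d$) in the trace condition so that it reads $\sum_{n,r}\alpha^n_{r,r}=1$ rather than $=d$; this is purely a matter of tracking conventions in $\mc D^*$ versus $\mc D$ and in the definition of the Choi map. The conceptual content — covariance forces the shift-diagonal support, complete positivity forces each block positive — is routine once the indexing is set up.
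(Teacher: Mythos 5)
Your proposal is correct and follows essentially the same route as the paper: both reduce to the $V$-covariant operation $\mc D$ of Equation (\ref{eq:Wkovins}), identify the shift-diagonal support forced by covariance (the paper phrases this as the Choi operator lying in the commutant of $p\mapsto V_p\otimes\ovl V_p$, which is exactly your condition $m-\ell\equiv r-s$), and obtain positivity of each block $(\alpha^n_{r,s})_{r,s}$ from the block-diagonal structure of the Choi operator. The small wobble in your trace-normalization step resolves correctly to $\sum_{n,r}\alpha^n_{r,r}=1$, matching the paper's direct computation of $\sum_jU_j\mc D^*(\id)U_j^*=\id$.
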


\begin{proof}
Suppose that $\Gamma\in{\bf Ins}_d^W$ is associated with an operation $\mc D$ that is covariant with respect to the representation $p\mapsto V_p$ (i.e.,\ is $V$-covariant), $\tr{\mc D^*(\id)}=1$, and $\Gamma$ is defined as in (\ref{eq:Wkovins}). The $V$-covariance means that, for the Choi operator $M=(\mc D^*\otimes\mr{id})(\Om_d)$, where $\Om_d=\sum_{m,n}|m,m\ra\la n,n|$, we have
$$
(V_p\otimes\ovl V_p)M=M(V_p\otimes\ovl V_p),\qquad p\in\mb Z_d,
$$
where $\ovl V_p=V_p^{*T}$ with the transpose defined with respect to the basis $\{|n\ra\}_n$. Since the representation $p\mapsto V_p\otimes\ovl V_p$ is associated with the spectral measure $q\mapsto\ms R_q$,
$$
\ms R_q=\sum_{r\in\mb Z_d}|q+r,r\ra\la q+r,r|,\qquad q\in\mb Z_d,
$$
i.e.,\ $V_p\otimes\ovl V_p=\sum_{q}\dual{q}{p}\ms R_q$, it follows that any operator that commutes with this representation has to be a linear combination of the operators $A_{n,r,s}=|n+r,r\ra\la n+s,s|$, $n,\,r,\,s\in\mb Z_d$. Writing $M=\sum_{n,r,s}\alpha_{r,s}^nA_{n,r,s}$, one finds that $\mc D$ is defined as in (\ref{eq:matrop}). The positivity of $M$ is equivalent with $\ms R_nM\ms R_n\geq0$ for all $n$ which, in turn, is equivalent with $(\alpha_{r,s}^n)_{r,s}\geq0$ for all $n$. Direct calculation shows that $\sum_jU_j\mc D^*(\id)U_j^*=\id$ is equivalent with $\sum_{n,r}\alpha_{r,r}^n=1$.
\end{proof}

Let us pick a compatible pair $(\ms M,\mc E)\in{\bf Obs}_d^{W,1}\times{\bf Ch}_d^{W,2}$ that has a joint instrument $\Gamma\in{\bf Ins}_d^W$ that is associated with $\alpha$ like that in Lemma \ref{lemma:Wkovmatr}. Through simple calculations one finds that the trace-1 positive operator $C$ associated with $\ms M$ according to (\ref{eq:Wkovsuure}) and the positive kernel $\Phi$ associated with $\mc E$ according to (\ref{eq:Wker}) are given by
\begin{equation}\label{eq:alphakaavat}
C=\sum_{n,r\in\mb Z_d}\alpha_{r,r}^n|n+r\ra\la n+r|,\qquad\Phi_{q,p}=\sum_{n,r\in\mb Z_d}\ovl{\dual{n}{p}}\alpha_{r,r-q}^n,\quad q,\,p\in\mb Z_d.
\end{equation}
In the case of our special incompatible pair $(\ms A,\mr{id})$, the trace-1 positive operator associated with $\ms A$ is $|0\ra\la0|$ and the positive kernel associated with the identity channel is the constant kernel $1$. It follows that, if we require the compatible pair like that above to be a convex combination of the form
$$
(\ms M,\mc E)=t(\ms A,\mr{id})+(1-t)(\ms B,\mc B)
$$
with some $t\in[0,1]$ and some (not necessarily compatible) pair $(\ms B,\mc B)\in{\bf Obs}_d\times{\bf Ch}_d$, then we must have $D=C-t|0\ra\la0|\geq0$ and the kernel $\Psi=\Phi-t$ has to be positive, where $C$ and $\Phi$ are defined as in (\ref{eq:alphakaavat}). Since $C$ is diagonalized in the basis $\{|n\ra\}_n$, it is a straight-forward check that the first condition is equivalent with
\begin{equation}\label{eq:Dpos}
t\leq\sum_{n\in\mb Z_d}\alpha_{-n,-n}^n=:w_1(\alpha).
\end{equation}
Using the characterization of (\ref{eq:posit}) for the positivity of the kernel $\Psi$, the latter condition is equivalent with
\begin{equation}\label{eq:Psipos}
t\leq\frac{1}{d}\sum_{r,s\in\mb Z_d}\alpha_{r,s}^0=:w_2(\alpha).
\end{equation}
The robustness $W(\ms A,\mr{id})$ is simply the supremum of $\min\{w_1(\alpha),\,w_2(\alpha)\}$ over all those $\alpha=(\alpha_{r,s}^n)_{n,r,s}$ such that $(\alpha_{r,s}^n)_{r,s}\geq0$ for all $n\in\mb Z_d$ and $\sum_{n,r}\alpha_{r,r}^n=1$.

We may simplify the optimization task presented above by a couple of observations: First, we note that, given $\alpha$, the elements $\alpha_{r,s}^n$ where $n\neq0$, $r\neq-n$, and $s\neq-n$ can be assumed to be zero; this assumption does not affect the value of $w_2$ and it can only increase the value of $w_1$. The elements $\alpha_{-n,-n}^n$ are, of course, non-negative. Second, the property $\sum_{n,r}\alpha_{r,r}^n=1$ and the positivity requirements are not violated if we replace the $\alpha$, where the elements $\alpha_{r,s}^n$ with $n\neq0$, $r\neq-n$, and $s\neq-n$ are zero, with $\tilde\alpha$, where $\tilde\alpha_{0,0}^0=\sum_{n\in\mb Z_d}\alpha_{-n,-n}^n$, $\tilde\alpha_{r,s}^n=0$ for all $n\neq0$ and $r,\,s\in\mb Z_d$, and the rest of the entries in $\tilde\alpha$ coincide with their counterparts in $\alpha$. In this process, the value of $w_1$ does not change but $w_2$ may increase. Hence, in our optimization, it suffices to study only those $\alpha$ such that $\alpha_{r,s}^n=0$ for any $n\neq0$ and $r,\,s\in\mb Z_d$ and, for the upper block $(\alpha_{r,s}^0)_{r,s}=:A$, $A\geq0$ and $\tr{A}=1$. From now on, we forget about the zero blocks and only concentrate on the upper block $A$. Let us denote the natural basis of $\mb C^d$ by $\{e_r\}_{r\in\mb Z_d}$ and define the Fourier operator $\mc F\in\mc L(\mb C^d)$ as in (\ref{eq:fourier}) with the basis $\{\f_j\}_j$ replaced by $\{e_n\}_n$. Fix the basis $\{f_n\}_n$, $f_n=\mc Fe_n$, $n\in\mb Z_d$. We may define $w_1(A)=\sis{e_0}{Ae_0}$, $w_2(A)=\sis{f_0}{Af_0}$, and $w_0(A)=\min\{w_1(A),\,w_2(A)\}$. We have found that $W(\ms A,\mr{id})$ is the supremum of $w_0(A)$ over the positive trace-1 operators $A\in\mc L(\mb C^d)$.

This optimization task can still be further simplified: For any $A\in\mc L(\mb C^d)$, let us define $A^{\mc F}\in\mc L(\mb C^d)$ through
$$
A^{\mc F}=\frac14\sum_{k=1}^4\mc F^kA\mc F^{*k}.
$$
The fixed points of the map $A\mapsto A^{\mc F}$ are exactly the Fourier-invariant operators $B$, i.e.,\ $\mc FB=B\mc F$. One finds that $w_0(A^{\mc F})=w_1(A^{\mc F})=w_2(A^{\mc F})=\frac12\big(w_1(A)+w_2(A)\big)\geq w_0(A)$ for all positive $A\in\mc L(\mb C^d)$. Thus, our task is simply to optimize the linear functional $w_0(A)=\sis{e_0}{Ae_0}$ over the set of positive trace-1 Fourier-invariant operators $A$ on $\mb C^d$. The optimal value is reached at one of the extreme points of the set of such operators, and these extreme points coincide with the projections onto the one-dimensional subspaces generated by the eigenvectors of $\mc F$. The Fourier operator has four eigenvalues, the fourth roots of 1 $i^k$, $k=1,\,2,\,3,\,4$, and the corresponding eigenprojections are
$$
P_k=\frac14\big(\id+(-i)^k\mc F+(-1)^k\mc F^2+ i^k\mc F^3\big),\quad k=1,\,2,\,3,\,4.
$$
Hence, especially, for an extreme point $A$ of the set of trace-1 positive Fourier-invariant operators, there is a unique $k\in\{1,\,2,\,3,\,4\}$ such that $A=P_kAP_k$ and $w_0(A)=\sis{P_ke_0}{AP_ke_0}$. One finds that $P_1e_0=P_3e_0=0$ and $P_2e_0=\frac12(e_0-f_0)$ and $P_4e_0=\frac12(e_0+f_0)$ which means that $w$ is maximized at a projection onto a one-dimensional subspace of the $P_2$- or $P_4$-eigenspace. It is obvious that, amongst such operators supported on the $P_2$-eigenspace, the optimal one is $A_-=|v_-\ra\la v_-|$ and, amongst the extreme points supported on the $P_4$-eigenspace, the highest value for $w_0$ is given by $A_+=|v_+\ra\la v_+|$, where
$$
v_\pm=\sqrt{\frac{\sqrt{d}}{2(\sqrt{d}\pm1)}}(e_0\pm f_0).
$$
Simple check shows that $w_0(A_-)<w_0(A_+)=\frac12(1+1/\sqrt{d})$. Thus, $W(\ms A,\mr{id})=\frac12(1+1/\sqrt{d})$. Again, for a decodable channel $\mc V\in{\bf Ch}(\hil,\mc K)$, one has $(\ms A,\mc V)=_\mr{post}(\ms A,\mr{id})$, so that, according to Theorem \ref{theor:prop3}:

\begin{theorem}\label{theor:Aidjoin}
The robustness of incompatibility for any von Neumann observable $\ms A\in{\bf Obs}_d$ and any decodable channel $\mc V\in{\bf Ch}(\hil,\mc K)$ is
$$
W(\ms A,\mc V)=\frac12\Big(1+\frac{1}{\sqrt{d}}\Big).
$$
\end{theorem}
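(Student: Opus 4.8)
The plan is to reduce the general statement to the single pair $(\ms A_0,\mr{id})$, where $\ms A_0=\big(|n\ra\la n|\big)_{n\in\mb Z_d}$ is the canonical rank-$1$ PVM and $\mr{id}\in{\bf Ch}_d$, and then compute $W(\ms A_0,\mr{id})$ by hand. Any $\ms A\in{\bf Obs}_d$ that is a rank-$1$ PVM has $\ms A_n=|e_n\ra\la e_n|$ for some orthonormal basis $\{e_n\}$ of $\hil$, so $\ms A=\mc U^*\circ\ms A_0$ with $\mc U$ the unitary channel implementing the basis change; since $\mc U$ is invertible, $(\ms A,\mc U)=_\mr{prae}(\ms A_0,\mr{id})$, and since a unitary channel is decodable, $(\ms A,\mc U)=_\mr{post}(\ms A,\mr{id})$. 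By Theorem \ref{theor:prop3}, $W(\ms A,\mr{id})=W(\ms A,\mc U)=W(\ms A_0,\mr{id})$. Finally, an arbitrary decodable $\mc V\in{\bf Ch}(\hil,\mc K)$ admits a channel $\mc B$ with $\mc B\circ\mc V=\mr{id}$ while $\mc V=\mc V\circ\mr{id}$ trivially, so $(\ms A,\mc V)=_\mr{post}(\ms A,\mr{id})$ and Theorem \ref{theor:prop3}(b) gives $W(\ms A,\mc V)=W(\ms A_0,\mr{id})$. Hence everything reduces to showing $W(\ms A_0,\mr{id})=\tfrac12\big(1+1/\sqrt d\big)$.

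For that I would first apply the covariance reduction of Lemma \ref{lemma:Wkov}: because $\ms A_0\in{\bf Obs}_d^{W,1}$ and $\mr{id}\in{\bf Ch}_d^{W,2}$, the supremum defining $W(\ms A_0,\mr{id})$ may be restricted to $W$-covariant noise $(\ms B,\mc B)\in{\bf Obs}_d^{W,1}\times{\bf Ch}_d^{W,2}$, and the resulting near-compatible convex combination then has a $W$-covariant joint instrument. By Lemma \ref{lemma:Wkovmatr} such an instrument is parametrized by a family $\alpha=(\alpha^n_{r,s})$ with each block $(\alpha^n_{r,s})_{r,s}\geq0$ and $\sum_{n,r}\alpha^n_{r,r}=1$. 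Reading off the operator $C$ of the observable marginal and the kernel $\Phi$ of the channel marginal from $\alpha$ as in (\ref{eq:alphakaavat}), and demanding that the joint instrument split as $t(\ms A_0,\mr{id})+(1-t)(\ms B,\mc B)$ with $(\ms B,\mc B)$ a (not necessarily compatible) covariant pair, one finds — using that the operator of $\ms A_0$ is $|0\ra\la0|$ and the kernel of $\mr{id}$ is the constant $1$ — that this is possible precisely when $C-t|0\ra\la0|\geq0$ and the Fourier transform of $\Phi-t$ is pointwise nonnegative. Since $C$ is diagonal in $\{|n\ra\}$, the first condition reads $t\leq w_1(\alpha):=\sum_{n}\alpha^n_{-n,-n}$; by the criterion (\ref{eq:posit}) the second reads $t\leq w_2(\alpha):=\tfrac1d\sum_{r,s}\alpha^0_{r,s}$. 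Therefore $W(\ms A_0,\mr{id})=\sup_\alpha\min\{w_1(\alpha),w_2(\alpha)\}$, a genuine maximum because ${\bf Comp}$ is compact.

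Then I would streamline the optimization. Zeroing every block with $n\neq0$ and shifting the freed diagonal weight into $\alpha^0_{0,0}$ respects the constraints and does not decrease $\min\{w_1,w_2\}$, so the task reduces to maximizing $w_0(A):=\min\{\sis{e_0}{Ae_0},\sis{f_0}{Af_0}\}$ over positive trace-$1$ operators $A$ on $\mb C^d$, where $f_0=\mc Fe_0$ and $\mc F$ is the Fourier operator. The decisive move is to symmetrize over the cyclic group generated by $\mc F$: for positive trace-$1$ $A$ the operator $A^{\mc F}=\tfrac14\sum_{k=1}^4\mc F^kA\mc F^{*k}$ is Fourier-invariant, and since $\mc Fe_0=f_0$, $\mc F^2e_0=e_0$, one gets $\sis{e_0}{A^{\mc F}e_0}=\sis{f_0}{A^{\mc F}f_0}=\tfrac12\big(\sis{e_0}{Ae_0}+\sis{f_0}{Af_0}\big)\geq w_0(A)$. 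Hence $W(\ms A_0,\mr{id})$ equals the maximum of the \emph{linear} functional $A\mapsto\sis{e_0}{Ae_0}$ over the compact convex set of Fourier-invariant positive trace-$1$ operators, attained at an extreme point — a rank-$1$ projection onto an eigenvector of $\mc F$. Since $\mc F^4=\id$, the eigenprojections are $P_k=\tfrac14\big(\id+(-i)^k\mc F+(-1)^k\mc F^2+i^k\mc F^3\big)$, $k=1,\dots,4$, and a short computation using $\sis{e_0}{f_0}=1/\sqrt d$ shows $P_1e_0=P_3e_0=0$, $P_2e_0=\tfrac12(e_0-f_0)$, $P_4e_0=\tfrac12(e_0+f_0)$; the largest value of $\sis{e_0}{Ae_0}$ over unit eigenvectors is thus $\|P_4e_0\|^2=\tfrac12\big(1+1/\sqrt d\big)$, reached at the normalized $e_0+f_0$. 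Combined with the first paragraph this proves the theorem.

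The reductions of the first paragraph and the symmetrization/extreme-point computation are routine given the covariance lemmas. The step I expect to be the genuine obstacle is the middle one: correctly extracting $C$ and $\Phi$ from a general $W$-covariant joint instrument and verifying that the existence of a near-compatible splitting $t(\ms A_0,\mr{id})+(1-t)(\ms B,\mc B)$ is \emph{equivalent} to the clean pair $t\leq w_1(\alpha)$, $t\leq w_2(\alpha)$ — in particular that the residual observable and channel $\ms B,\mc B$ can always be chosen to absorb the leftover weight without imposing any further inequality. Once the problem is recast as maximizing a linear functional over Fourier-invariant states, the rest is mechanical.
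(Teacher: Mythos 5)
Your proposal is correct and follows essentially the same route as the paper: covariantization via Lemmas \ref{lemma:Wkov} and \ref{lemma:Wkovmatr}, the reduction to the two linear constraints $w_1(\alpha),w_2(\alpha)$, Fourier-symmetrization, and optimization over the eigenprojections of $\mc F$, with the passage to a general decodable $\mc V$ handled by post-processing equivalence and Theorem \ref{theor:prop3} exactly as in the text. The only (harmless) addition is your explicit unitary pre-processing reduction of an arbitrary rank-1 PVM to the canonical basis observable, which the paper leaves implicit by fixing the basis.
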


The optimal positive trace-1 Fourier-invariant operator $A_+$ of the discussion preceding Theorem \ref{theor:Aidjoin} gives rise to an operation $\mc D$ through Equation (\ref{eq:matrop}) which, in turn, defines an instrument $\Gamma$ according to (\ref{eq:Wkovins}). It follows that
$$
\Gamma_j(\rho)=\frac{\sqrt{d}}{2(\sqrt{d}+1)}\Big(\frac{1}{\sqrt{d}}\id+\ms A_j\Big)\rho\Big(\frac{1}{\sqrt{d}}\id+\ms A_j\Big).
$$
The optimal decomposition for the compatible pair $(\ms M,\mc E)=(\Gamma_{(1)},\Gamma_{(2)})$ is given by
$$
(\ms M,\mc E)=\frac12\Big(1+\frac{1}{\sqrt{d}}\Big)(\ms A,\mr{id})+\frac12\Big(1-\frac{1}{\sqrt{d}}\Big)(\ms B,\mc B),
$$
where
$$
\ms B=-\frac{1}{d-1}\ms A+\frac{d}{d-1}\ms T,\qquad\mc B=-\frac{1}{d-1}\mr{id}+\frac{d}{d-1}\mc E_{\ms A},
$$
where $\ms T$ is the trivial observable $\ms T_j=\frac{1}{d}\id$, $j\in\mb Z_d$, and $\mc E_{\ms A}\in{\bf Ch}_d^{W,2}$ is the L\"uders channel $\rho\mapsto\sum_j\ms A_j\rho\ms A_j$ associated with $\ms A$. Moreover, the optimal compatible pair $(\ms M,\mc E)$ is given by
$$
\ms M=\frac{\sqrt{d}+2}{2(\sqrt{d}+1)}\ms A+\frac{\sqrt{d}}{2(\sqrt{d}+1)}\ms T,\qquad\mc E=\frac{\sqrt{d}+2}{2(\sqrt{d}+1)}\mr{id}+\frac{\sqrt{d}}{2(\sqrt{d}+1)}\mc E_{\ms A}.
$$
Similarly, for a decodable channel $\mc V$, the pair
$$
\frac12\Big(1+\frac{1}{\sqrt{d}}\Big)(\ms A,\mc V)+\frac12\Big(1-\frac{1}{\sqrt{d}}\Big)(\ms B_{\mc V},\mc B_{\mc V})=\frac{\sqrt{d}+2}{2(\sqrt{d}+1)}(\ms A,\mc V)+\frac{\sqrt{d}}{2(\sqrt{d}+1)}(\ms T,\mc V\circ\mc E_{\ms A})
$$
is compatible, where
$$
\ms B_{\mc V}=-\frac{1}{d-1}\ms A+\frac{d}{d-1}\ms T,\qquad\mc B_{\mc V}=-\frac{1}{d-1}\mc V+\frac{d}{d-1}\mc V\circ\mc E_{\ms A},
$$

\section{Conclusions}

Given a convex subset $L_0$ of a real vector space, we have introduced measures of how well an element of the minimal affine subspace $F$ containing $L_0$ resists immersion in $L_0$ under added noise. Especially, we have concentrated on the case where $L_0\subset K\subset F$ and $K$ is convex, in which case such measures can be defined relative to $K$. As a physical application, these robustness measures were studied in the case where $K$ is the set of pairs of given quantum devices (observables or channels) and $L_0$ is the set of compatible pairs within $K$. In this context, we call such a measure as robustness of incompatibility.

Basic properties of the robustness measures have been investigated especially regarding monotonicity under certain compatibility-increasing partial orderings of device pairs. Lastly, values for the absolute robustness were calculated in three exemplary cases: a pair of Fourier-coupled rank-1 sharp observables, a pair of decodable channels (especially for unitary channels), and a pair consisting of a rank-1 sharp observable and a decodable channel.

However, we do not have a general method for how to calculate the robustness of incompatibility for a general pair of quantum devices; all our examples utilize symmetries in calculating the values of the robustness measure. Moreover, especially in the case of compatibility of observables and channels, it would, perhaps, be more natural to consider the other robustness function $w$ instead of $W$; any noise in a measurement process affects both the registering branch and the state-change branch globally and is hence, typically, compatible. However, calculating the value for $w$ in the example involving a von Neumann observable and a decodable channel leads to quite a complicated optimization problem. Moreover, calculating the robustness of incompatibility for the infinite-dimensional sharp Weyl-pair (position-momentum pair in $L^2(\mb R)$) is a possible continuation of the analysis dealing with the finite-dimensional case of Section \ref{sec:weyl}. For the time being, we only conjecture this infinite-dimensional pair to have the robustness $1/2$, i.e.,\ this pair would be an example of a maximally incompatible pair according to the robustness measure.

\subsection*{Acknowledgements}

The author would like to thank Dr. Teiko Heinosaari, Dr. David Reeb, Dr. Jussi Schultz, and Dr. Michal Sedl\'ak  for inspiring discussions and suggestions as well as the esteemed referees for their constructive feedback. Financial support from the Doctoral Programme in Physical and Chemical Sciences (PCS) of the University of Turku is acknowledged.

\end{document}